%% LyX 2.0.4 created this file.  For more info, see http://www.lyx.org/.
%% Do not edit unless you really know what you are doing.
\documentclass[english]{article}
\pdfoutput=1

\usepackage[citecolor=blue]{hyperref}
\usepackage[latin9]{inputenc}
\usepackage{mathrsfs}
\usepackage{amsthm}
\usepackage{amsmath}
\usepackage{amssymb}
\usepackage{graphicx}
\usepackage{caption}
\usepackage{subcaption}
\usepackage{float}

\newcommand{\CA}{\mbox{$\mathsf{CA}$}}
\newcommand{\CAN}{\mbox{$\mathsf{CAN}$}}
\newcommand{\col}{\mbox{$\mathsf{col}$}}
\newcommand{\E}{\mbox{$\mathsf{E}$}}
\newcommand{\MCA}{\mbox{$\mathsf{MCA}$}}

\newcommand{\TS}{\mbox{$\mathsf{TS}$}}
\newcommand{\rand}{\mbox{$\mathsf{Rand}$}}
\newcommand{\mt}{\mbox{$\mathsf{MT}$}}
\newcommand{\naive}{\mbox{$\mathsf{Naive}$}}
\newcommand{\greedy}{\mbox{$\mathsf{Greedy}$}}
\newcommand{\color}{\mbox{$\mathsf{Col}$}}
\newcommand{\den}{\mbox{$\mathsf{Den}$}}
\newcommand{\T}{\mbox{$\mathsf{Trivial}$}}
\newcommand{\C}{\mbox{$\mathsf{Cyclic}$}}
\newcommand{\F}{\mbox{$\mathsf{Frobenius}$}}

\makeatletter

%%%%%%%%%%%%%%%%%%%%%%%%%%%%%% LyX specific LaTeX commands.
%% A simple dot to overcome graphicx limitations

%%%%%%%%%%%%%%%%%%%%%%%%%%%%%% Textclass specific LaTeX commands.
\theoremstyle{plain}
\newtheorem{thm}{\protect\theoremname}
\theoremstyle{plain}
\newtheorem{lem}[thm]{\protect\lemmaname}

%%%%%%%%%%%%%%%%%%%%%%%%%%%%%% User specified LaTeX commands.
\usepackage{fullpage}

\usepackage[lined, linesnumbered, ruled]{algorithm2e}

\makeatother

\usepackage{babel}
\providecommand{\lemmaname}{Lemma}
\providecommand{\theoremname}{Theorem}

\begin{document}

\title{Two-stage algorithms for covering array construction}

\author{Kaushik Sarkar and Charles J. Colbourn\\
School of Computing, Informatics, and Decision Systems Engineering\\
Arizona State University, PO Box 878809\\
Tempe, Arizona, 85287-8809, U.S.A.}

\maketitle

\begin{abstract}
Modern software systems often consist of many different components, each with a number of options. 
Although unit tests may reveal faulty options for individual components, functionally correct components may interact in unforeseen ways to cause a fault. 
Covering arrays are used to test for interactions among components systematically. 
A  two-stage framework, providing a number of concrete algorithms, is developed for the efficient construction of covering arrays. 
%Our framework divides the construction in two stages. 
In the first stage, a time and memory efficient randomized algorithm covers most of the interactions. 
In the second stage, a more sophisticated search  covers the remainder in relatively  few tests. 
In this way, the storage limitations of the sophisticated search algorithms are avoided; 
hence the range of the number of components for which the algorithm can be applied is  extended, without increasing the number of tests. 
Many of the framework instantiations can be tuned to optimize a  memory-quality trade-off, so that fewer tests can be achieved using more memory.
The algorithms developed outperform the currently best known methods when the number of components ranges from 20 to 60, the number of options for each ranges from 3 to 6, and $t$-way interactions are covered for $t\in \{5,6\}$.
In some cases  a reduction in the number of tests  by more than $50\%$ is achieved.
\end{abstract}

\hspace{0.2in}\textbf{Keywords:} Covering array, Software interaction testing, Combinatorial construction algorithm

\section{Introduction}
Real world software and engineered systems are composed of many different components, each with a number of options,  that are required to work together in a variety of circumstances. 
Components are factors, and options for a component form the levels of its factor.
Although each level for an individual factor can be tested in isolation, faults in deployed software can arise from  interactions among levels of different factors. 
When an interaction involves levels of $t$ different factors, it is a $t$-way interaction.
Testing for faults caused by $t$-way interactions for every $t$ is generally infeasible, as a result of a combinatorial explosion.
However, empirical research on real world software systems indicates that testing all possible 2-way (or 3-way) interactions would  detect $70\%$ (or $90\%$) of all faults \cite{KUHN1}. 
Moreover, testing all possible 6-way interactions is sufficient for detection of $100\%$ of all  faults in the systems examined in \cite{KUHN1}. 
Testing all possible $t$-way interactions  for some $2 \leq t \leq 6$ is  \emph{pseudo-exhaustive testing} \cite{Kuhnbook}, and is accomplished with a combinatorial array known as a covering array.

Formally, let $N,\, t,\, k,$ and $v$ be integers with $k\ge t\ge 2$ and $v \ge 2$. 
A \emph{covering array} $\CA(N;t,k,v)$ is an $N\times k$ array $A$ in which each entry is from a $v$-ary alphabet $\Sigma$, and for every $N\times t$ sub-array $B$ of $A$ and every  ${\bf x} \in \Sigma^{t},$ there is a  row of $B$ that equals $\bf x$. 
Then $t$ is  the \emph{strength} of the covering array, $k$ is  the number of \emph{factors}, and $v$ is  the number of \emph{levels}. 

When $k$ is a  positive integer,  $[k]$  denotes the set $\{1,\ldots,k\}$. 
A \emph{$t$-way interaction}  is $\{(c_{i},a_{i})\,:\,1\le i\le t,\, c_{i}\in[k],\, c_{i}\neq c_{j}\,\text{for }i\neq j,\,\text{and }a_{i}\in\Sigma\}$.
So an interaction is an assignment of levels from  $\Sigma$ to  $t$  of the $k$  factors. 
$\mathcal{I}_{t,k,v}$ denotes the set of all $\binom{k}{t}v^t$ interactions for given $t,\,k$ and $v$.
An $N\times k$ array $A$ \emph{covers} the interaction $\iota = \{(c_{i},a_{i})\,:\,1\le i\le t,\, c_{i}\in[k],\, c_{i}\neq c_{j}\,\text{for }i\neq j,\,\text{and }a_{i}\in\Sigma\}$ if there is a row $r$ in $A$ such that $A(r,c_{i})=a_{i}$ for $1\le i\le t$. 
When there is no such row in
$A$,  $\iota$ is \emph{not covered} in $A$. 
Hence a $\CA(N;t,k,v)$ covers all  interactions in $\mathcal{I}_{t,k,v}$.

Covering arrays are used extensively for interaction testing in complex engineered systems. To ensure that all  possible combinations of options of $t$ components function together correctly, one needs examine all possible $t$-way interactions. 
When the number of components is $k$, and the number of different options available for each component is  $v$, each row of $\CA(N;t,k,v)$ represents a test case. 
The $N$ test cases  collectively test all  $t$-way interactions. 
For this reason, covering arrays have been used in combinatorial interaction testing in varied fields like software and hardware engineering, design of composite materials, and biological networks \cite{CAWSE,Kuhnbook,KUHN2,RonC,SerB}. 

The cost of testing is directly related to the number of test cases. 
Therefore, one is interested in covering arrays with the fewest rows.
The smallest value of $N$ for which $\CA(N;t,k,v)$ exists is denoted by $\CAN(t,k,v)$. 
Efforts to determine or bound $\CAN(t,k,v)$ have been extensive; see \cite{sicily,croatia,Kuhnbook,NieL-CS} for example. 
Naturally one would prefer to determine $\CAN(t,k,v)$ exactly.  
Katona \cite{katona} and Kleitman and Spencer \cite{kleitman} independently showed that for $t=v=2$,
the minimum number of rows $N$ in a $\CA(N;2,k,2)$ is the smallest $N$ for which 
$k\leq{N-1 \choose \lceil\frac{N}{2}\rceil}$.
Exact determination of $\CAN(t,k,v)$ for other values of $t$ and $v$ has remained open. 
However, some progress has been made in determining upper bounds for $\CAN(t,k,v)$ in the general case; for recent results, see  \cite{sarkar16}.

For practical applications such bounds are often unhelpful, because one needs explicit covering arrays to use as test suites. 
Explicit constructions can be recursive, producing larger covering arrays using smaller ones as ingredients (see \cite{croatia} for a survey), or direct.
Direct methods for some specific cases arise from algebraic, geometric, or number-theoretic techniques; general direct methods are computational in nature.
Indeed when $k$ is relatively small, the best known results arise from computational techniques \cite{catables}, and these are in turn essential for the successes of recursive methods.
Unfortunately, the existing computational methods encounter difficulties as $k$ increases, but is still within the range needed for practical applications.
Typically such difficulties arise either as a result of storage or time limitations or by producing covering arrays that are too big to compete with those arising from simpler recursive methods.

Cohen \cite{Cohen04} discusses commercial software where the number of factors often exceeds $50$. 
Aldaco et al. \cite{Aldaco15} analyze a complex engineered system having 75 factors, using a variant of covering arrays. 
Android \cite{android16} uses a  \emph{Configuration} class to describe the device configuration; there are  $17$ different configuration parameters  with $3-20$ different levels. 
In each of these cases, while existing techniques are effective when the strength is small, these moderately large values of $k$ pose concerns for larger strengths.

In this paper, we focus on situations in which every factor has the same number of levels.  
These cases have been most extensively studied, and hence provide a basis for making comparisons.
In practice, however, often different components have different number of levels, which is captured by extending the notion of a covering array. 
A \emph{mixed covering array} $\MCA(N;t,k,(v_1,v_2,\ldots,v_k))$ is an $N\times k$ array in which the $i$th column contains $v_i$ symbols for $1 \leq i \leq k$.
When $\{i_1,\ldots,i_t\}\subseteq \{1,\ldots,k\}$ is a set of  $t$ columns, in the $N\times t$ subarray obtained by selecting columns $i_1,\ldots,i_t$ of the MCA, 
each of the  $\prod_{j=1}^t v_{i_j}$ distinct $t$-tuples appears as a row at least once.
Although we examine the uniform case in which $v_1 = \cdots =  v_k$, the methods developed here can all be directly applied to mixed covering arrays as well.

Inevitably, when $k > \max(t+1,v+2)$, a covering array must cover some interactions more than once, for if not they are orthogonal arrays \cite{oas}.
Treating the rows of a covering array in a fixed order, each row covers some number of interactions not covered by any earlier row.  
For a variety of known constructions, the initial rows cover many new interactions, while the later ones cover very few \cite{BryceC}.
Comparing this rate of coverage for a purely random method and for one of the sophisticated search techniques, one finds little difference in the initial rows, but very substantial differences in the final ones.  
This suggests strategies to build the covering array in stages, investing more effort as the number of remaining uncovered interactions declines.

In this paper we propose a new algorithmic framework for covering array construction,  the \emph{two-stage framework}. 
In the first stage, a randomized row construction method  builds a specified number of rows to cover all but at most a specified, small number of interactions. 
As we  see later, by dint of being randomized  this  uses very little memory. 
The second stage is  a more sophisticated search  that adds few rows  to cover the remaining uncovered interactions. 
We choose search algorithms whose requirements depend on the number of interactions to be covered, to profit from the fact that few interactions remain. 
By mixing randomized and deterministic methods, we hope to retain the fast execution and small storage of the randomized methods, along with the accuracy of the deterministic search techniques.
 
We introduce a number of algorithms within the two-stage framework. 
Some  improve upon best known bounds on $\CAN(t,k,v)$ (see \cite{sarkar16}) in principle.  But our focus is on the practical consequences: 
The two-stage algorithms are indeed quite efficient for  higher strength ($t\in\{5,6\}$) and moderate number of levels ($v\in\{3,4,5,6\}$), when the number of factors $k$ is moderately high (approximately in the range of $20-80$ depending on value of $t$ and $v$). 
In fact, for many combination of $t, k$ and $v$ values the two-stage algorithms beat the previously best known bounds. 

Torres-Jimenez et al. \cite{jimenez-2ssa} explore a related two-stage strategy. 
In their first stage, an in-parameter-order greedy strategy (as used in ACTS \cite{Kuhnbook}) adds a column to an existing array; in their second stage, simulated annealing is applied to cover the remaining interactions.  
They apply their methods when $t=v=3$, when the storage and time requirements for both stages remain acceptable.
In addition to the issues in handling larger strengths, their methods provide no \emph{a priori} bound on the size of the resulting array.  
In contrast with their methods, ours provide a guarantee prior to execution with much more modest storage and time.

The rest of the paper is organized as follows.
Section \ref{sec:review} reviews algorithmic methods of covering array construction, specifically the randomized algorithm and the density algorithm. 
This section contrasts these two methods and points out their limitations. 
Then it gives an intuitive answer to the question of why a two stage based strategy might work and introduces the general two-stage framework. 
Section \ref{sec:two-stage} introduces some specific two-stage algorithms. 
Section \ref{subsec:simple} analyzes and evaluates the na\"ive strategy. 
Section \ref{subsec:density} describes a two-stage algorithm that combines the randomized and the density algorithm. 
Section \ref{subsec:coloring} introduces graph coloring based techniques in the second stage. 
Section \ref{subsec:group} examines the effect of group action on the size of the constructed covering arrays. 
Section \ref{sec:results} compares the results of various two-stage algorithms with the presently best known sizes. 
In Section \ref{sec:mt-algo} we discuss the Lov\'asz local lemma (LLL)  bounds on $\CAN(t,k,v)$ and provide a Moser-Tardos type randomized algorithm for covering array construction that matches the bound. 
Although the bound was known \cite{GSS}, the proof was non-constructive, and a constructive algorithm to match this bound seems to be absent in the literature. 
We explore potentially better randomized algorithms for the first stage using LLL based techniques,
We also obtain a two-stage  bound that improves the LLL  bound for $\CAN(t,k,v)$.
We conclude the paper in Section \ref{sec:conc}.

\section{Algorithmic construction of covering arrays}\label{sec:review}

Available algorithms for the construction of covering arrays are primarily heuristic in nature; indeed exact algorithms have succeeded for very few cases.  
Computationally intensive metaheuristic search methods such as simulated annealing, tabu search, constraint programming, and genetic algorithms have been employed when the
strength is relatively small or the number of factors and levels is small.  
These methods have established many of the best known bounds on sizes of covering arrays \cite{catables}, but for many problems of practical size their time and storage requirements are prohibitive.  
For larger problems, the best available methods are greedy.
The IPO family of algorithms \cite{Kuhnbook} repeatedly adds one column at a time, and then adds new rows to ensure complete coverage. 
In this way, at any point in time, the status of $v^t \binom{k-1}{t-1}$ interactions may be stored.
AETG \cite{AETG} pioneered a different method, which greedily selects one row at a time 
%to add 
to cover a large number of as-yet-uncovered interactions.  
They establish that if a row can be chosen that covers the maximum number, a good \emph{a priori} bound on the size of the covering array can be computed.  
Unfortunately selecting the maximum is NP-hard, and even if one selects the maximum there is no guarantee that the covering array is the smallest possible \cite{BryceC}, so AETG resorts to a good heuristic selection of the next row by examining the stored status of $v^t \binom{k}{t}$ interactions.
None of the methods so far mentioned therefore guarantee to reach an \emph{a priori} bound.
An extension of the AETG strategy, the density algorithm  \cite{DDA, nDDA, ColCECA}, stores additional statistics for all $v^t \binom{k}{t}$ interactions in order to ensure the selection of a good next row, and hence guarantees to produce an array with at most the precomputed number of rows. 
Variants of the density algorithm have proved to be most effective for problems of moderately large size.
For even larger problems, pure random approaches have been applied.

To produce methods that provide a guarantee on size, it is natural to focus on the density algorithm in order to understand its strengths and weaknesses. 
To do this,  we contrast it with a  basic randomized algorithm. 
Algorithm \ref{algo:rand-slj} shows a simple randomized algorithm for covering array construction. 
The algorithm constructs an array of a particular size randomly and checks whether all the interactions are covered.  It repeats until it finds an array that covers all the interactions.

\begin{algorithm}[t]
%\SetAlgoNoLine
\SetKw{Break}{break}
\KwIn{$t$ : strength of the  covering array, $k$ : number of factors, $v$ : number of levels for each factor}
\KwOut{$A$ : a $\CA(N;t,k,v)$}
Set $N :=\left \lfloor  \frac{\log{k \choose t}+t\log v}{\log\left(\frac{v^{t}}{v^{t}-1}\right)} \right \rfloor$\;
\Repeat {covered $=$ true} { \label{algLine:iscovering}
    Construct an $N \times k$ array $A$ where each entry is chosen independently and uniformly at random from a $v$-ary alphabet\; \label{algLine:arr-construct}
    Set \emph{covered}$:=$ true\;
    \For {each interaction $\iota \in \mathcal{I}_{t,k,v}$}{ \label{algline:checkInt}
        \If {$\iota$ is not covered in $A$} {
            Set \emph{covered}$:=$ false\;
            \Break\;
        }
    }
}
Output $A$\;

\caption{A randomized algorithm for covering array construction.}
\label{algo:rand-slj}
\end{algorithm}

A $\CA(N;t,k,v)$ with $N = \frac{\log{k \choose t}+t \log v}{\log\left(\frac{v^{t}}{v^{t}-1}\right)}$ is guaranteed to exist:

\begin{thm}
\label{thm:slj} {\rm \cite{Johnson74,Lovasz75,Stein74}} (Stein-Lovász-Johnson (SLJ) bound): Let
$t,\, k,\, v$ be integers with $k\ge t\ge2$, and $v\ge2$. Then
as $k\rightarrow\infty$,

\[
\CAN(t,k,v)\le\frac{\log{k \choose t}+t \log v}{\log\left(\frac{v^{t}}{v^{t}-1}\right)}
\]
    
\end{thm}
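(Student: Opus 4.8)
The plan is to use the first-moment (probabilistic) method, which dovetails with the randomized construction in Algorithm \ref{algo:rand-slj}: I would show that a uniformly random $N \times k$ array over $\Sigma$ covers every interaction with positive probability once $N$ exceeds the stated threshold, so such an array must exist. First I would fix a single interaction $\iota \in \mathcal{I}_{t,k,v}$. Since $\iota$ prescribes values in $t$ particular columns and each entry in those columns is chosen independently and uniformly from $\Sigma$, a given row covers $\iota$ with probability exactly $v^{-t}$, and fails to cover it with probability $\frac{v^{t}-1}{v^{t}}$. As the $N$ rows are independent, the probability that $\iota$ is left uncovered by the whole array is $\left(\frac{v^{t}-1}{v^{t}}\right)^{N}$.

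Next I would sum over all interactions. Because $|\mathcal{I}_{t,k,v}| = \binom{k}{t} v^{t}$, linearity of expectation gives the expected number of uncovered interactions as
\[
\binom{k}{t}\, v^{t} \left(\frac{v^{t}-1}{v^{t}}\right)^{N},
\]
which by the union bound also bounds the probability that some interaction is missed. A $\CA(N;t,k,v)$ exists as soon as this quantity drops below $1$, since the count of uncovered interactions is a nonnegative integer. Setting the expression $<1$, taking logarithms, and using $\log\!\left(\frac{v^{t}-1}{v^{t}}\right) = -\log\!\left(\frac{v^{t}}{v^{t}-1}\right) < 0$, the inequality rearranges to
\[
N > \frac{\log\binom{k}{t} + t\log v}{\log\!\left(\frac{v^{t}}{v^{t}-1}\right)},
\]
which is precisely the claimed threshold; note the logarithm base is immaterial, as it cancels between numerator and denominator.

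The main obstacle is not the inequality itself but the gap between this real-valued threshold and the integer $N$ with which the bound is stated. Strictly, the argument guarantees existence only for $N = \lceil\,\cdot\,\rceil$ of the right-hand side, so reproducing the displayed expression exactly is what the qualifier ``as $k\rightarrow\infty$'' absorbs, the rounding contributing only a lower-order term asymptotically. I would also observe that the identical threshold arises constructively from the greedy (Stein--Lov\'asz--Johnson) set-cover argument: at each step some row covers at least a $v^{-t}$ fraction of the still-uncovered interactions, so after $N$ rows at most $\binom{k}{t} v^{t}\left(\frac{v^{t}-1}{v^{t}}\right)^{N}$ remain. This both accounts for the triple attribution and ties the existence bound to the algorithm that realizes it.
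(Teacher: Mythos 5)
Your proof is correct and matches the paper's approach: the paper states this as a classical result and justifies it via exactly the first-moment argument you give (a uniformly random $N\times k$ array leaves an expected $\binom{k}{t}v^{t}\left(\frac{v^{t}-1}{v^{t}}\right)^{N}$ interactions uncovered, so Algorithm \ref{algo:rand-slj} succeeds with constant probability at the stated $N$). Your remarks on the ceiling being absorbed by the ``as $k\rightarrow\infty$'' qualifier and on the greedy variant also align with the paper's subsequent discrete-SLJ discussion.
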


In fact, the probability that the $N \times k$ array constructed in line \ref{algLine:arr-construct} of Algorithm \ref{algo:rand-slj} is a valid covering array is high enough that the expected number of times the loop in line \ref{algLine:iscovering} is repeated is a small constant.

%The density algorithm is best understood as a derandomized version of a randomized algorithm. A simple randomized algorithm for covering array construction (Algorithm \ref{algo:rand-slj}) can be obtained from the following theorem which established the first general upper bound on $\CAN(t,k,v)$:
%
%
%The probabilistic argument used in the proof SLJ bound shows that probability that a random $N \times k$ array where $N$ is determined by Theorem \ref{thm:slj} is covering is non-zero. 

An alternative strategy is to add rows one by one  instead of constructing the full array at the outset. 
We start with an empty array, and whenever we add a new row  we ensure that it covers at least the expected number of previously uncovered interactions for a randomly chosen row. 
The probability that an uncovered interaction is covered by a random row is $1/v^t$. 
If the number of uncovered interactions is $u$, then by linearity of expectation, the expected number of newly covered interactions in a randomly chosen row is $uv^{-t}$. 
If each  row  added  covers exactly this expected number, we obtain the same number of rows as the SLJ bound, realized in Algorithm \ref{algo:rand-slj}. 
But because the actual number of newly covered interactions is always an integer, each added row covers at least $\lceil uv^{-t} \rceil$ interactions. 
This is especially helpful towards the end when the expected number is a small fraction. 

Algorithm \ref{algo:discrete-slj} follows this strategy. 
Again the probability that a randomly chosen row covers at least the expected number of previously uncovered interactions is high enough that the expected number of times the row selection loop in line \ref{algLine:rand-row-select} of Algorithm \ref{algo:discrete-slj} is repeated is bounded by a small constant.

\begin{algorithm}[htb]
\SetKw{Break}{break}
\KwIn{$t$ : strength of the  covering array, $k$ : number of factors, $v$ : number of levels for each factor}
\KwOut{$A$ : a $\CA(N;t,k,v)$}
Let $A$ be an empty array\;
Initialize a table $T$ indexed by all ${k \choose t} v^t$ interactions, marking  every interaction \emph{``uncovered''}\;
\While {there is an interaction marked ``uncovered'' in  $T$} {
    Let $u$ be the number of interactions marked \emph{``uncovered''} in  $T$\;
    Set \emph{expectedCoverage} $:= \lceil \frac{u}{v^t} \rceil$\;
    \Repeat {coverage $>$ expectedCoverage} { \label{algLine:rand-row-select}
        Let $r$ be a row of length $k$ where each entry is chosen independently and uniformly at random from a $v$-ary alphabet\;
        Let \emph{coverage} be the number of \emph{``uncovered''} interactions in  $T$ that are covered in row $r$\;
    }
    Add $r$ to $A$\;
    Mark all interactions covered by $r$ as \emph{``covered''} in  $T$\;
}
Output $A$\;

\caption{A randomized algorithm for covering array construction using the discrete SLJ strategy.}
\label{algo:discrete-slj}
\end{algorithm}

We can obtain an upper bound on the size  produced by Algorithm \ref{algo:discrete-slj} by assuming that each new row  added  covers exactly $\lceil uv^{-t} \rceil$ previously uncovered interactions. 
This bound is the \emph{discrete Stein-Lovász-Johnson} (discrete SLJ) bound. 
Figure \ref{fig:comp-slj-d-slj} compares the sizes of covering arrays obtained from the SLJ and the discrete SLJ bounds for different values of $k$ when $t=6$ and $v=3$. 
Consider a concrete example, when $t=5,\,k=20$, and $v=3$. 
The SLJ bound guarantees the existence of a covering array with $12499$ rows, whereas the discrete SLJ bound guarantees the existence of a covering array with only $8117$ rows. 

\begin{figure}[htbp]
\begin{centering}
    \includegraphics[bb=100bp 238bp 520bp 553bp,clip,scale=0.8]{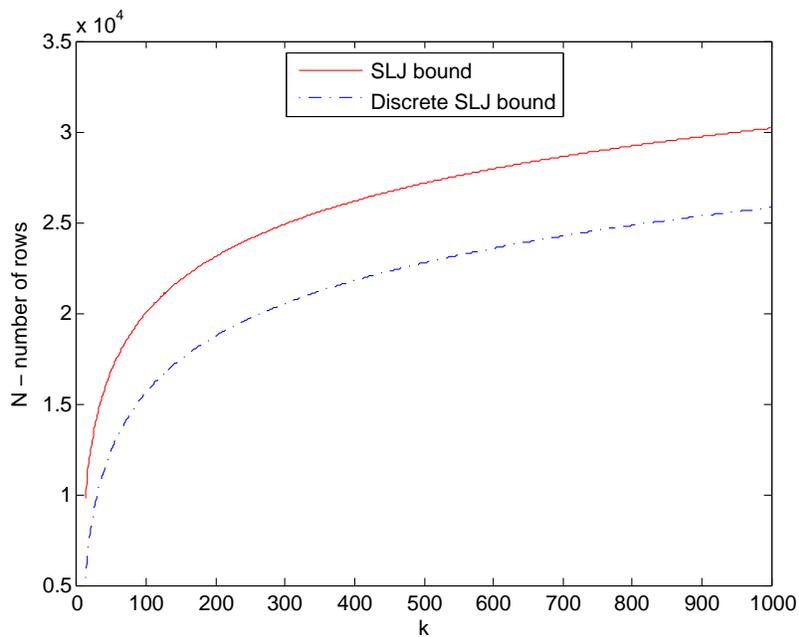}
    \par\end{centering}
    
    \caption{Comparison of covering array sizes obtained from SLJ bound and discrete SLJ bound for different values of $k$, when $t=6$ and $v=3$.}
    \label{fig:comp-slj-d-slj}
\end{figure}

The density algorithm replaces the  loop at line \ref{algLine:rand-row-select} of Algorithm \ref{algo:discrete-slj} by a conditional expectation  derandomized method. 
For fixed $v$ and $t$ the density algorithm selects a row efficiently (time polynomial in $k$) and deterministically that is guaranteed to cover at least $\lceil uv^{-t} \rceil$ previously uncovered interactions.
In practice, for small values of $k$ the density algorithm works quite well, often covering many more interactions than the minimum. 
Many of the currently best known $\CAN(t,k,v)$ upper bounds are obtained by the density algorithm in combination with various post-optimization techniques \cite{catables}. 

However, the practical applicability of Algorithm \ref{algo:discrete-slj}  and the density algorithm is limited by the  storage of the table $T$, representing  each of the $\binom{k}{t}v^t$ interactions. 
Even when $t=6$, $v=3$, and $k=54$, there are 18,828,003,285 6-way interactions. 
This huge memory requirement renders the density algorithm impractical for rather small values of $k$ when  $t\in\{5,6\}$ and $v\ge3$.
We present  an idea to circumvent this large requirement for memory, and develop it in full in Section \ref{sec:two-stage}.

\subsection{Why does a two stage based strategy make sense?}
Compare the two extremes,  the density algorithm and Algorithm \ref{algo:rand-slj}. 
On one hand, Algorithm \ref{algo:rand-slj} does not suffer from any substantial storage restriction, but appears to generate many more rows than the density algorithm. 
On the other hand, the density algorithm constructs fewer rows for small values of $k$, but becomes impractical  when $k$ is moderately large. 
One wants algorithms that behave like Algorithm \ref{algo:rand-slj} in terms of memory, but yield a number of rows competitive with the density algorithm.

For $t=6$, $k=16$, and $v=3$, Figure \ref{fig:cov-prof} compares the coverage profile for  the density algorithm and Algorithm \ref{algo:rand-slj}. 
We plot the number of newly covered interactions for each  row in the density algorithm, and  the expected number of newly covered interactions for each row  for Algorithm \ref{algo:rand-slj}. 
The qualitative features exhibited by this plot are representative of the rates of coverage for other parameters.

\begin{figure}[htbp]
\begin{centering}
\includegraphics[bb=100bp 238bp 520bp 553bp,clip,scale=0.8]{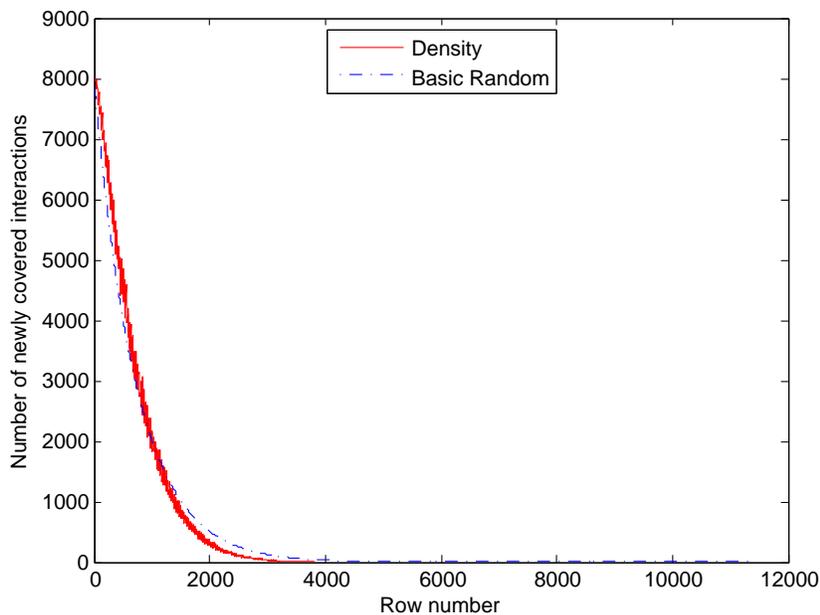}
\par\end{centering}

\caption{For $t=6$, $k=16$ and $v=3$, the actual number of newly covered interactions of  the density algorithm and the expected number of newly covered interactions in a random array.}
\label{fig:cov-prof}
\end{figure}

Two key observations are suggested by Figure \ref{fig:cov-prof}. 
First, the expected coverage in the initial random rows is similar to the rows chosen by the density algorithm. 
In this example, the partial arrays consisting of the first 1000 rows exhibit similar coverage, yet the randomized algorithm needed no extensive bookkeeping.
Secondly, as later rows are added, the judicious selections of the density algorithm produce much larger coverage per row than Algorithm \ref{algo:rand-slj}.
Consequently it appears sensible to invest few computational resources on the initial rows, while making more careful selections in the later ones.
This  forms the blueprint of our general \emph{two-stage} algorithmic framework shown in Algorithm \ref{algo:general-two-stage}.
 
\begin{algorithm}[htb]
\SetKw{Break}{break}
\KwIn{$t$ : strength of the required covering array, $k$ : number of factors, $v$ : number of levels for each factor}
\KwOut{$A$ : a $\CA(N;t,k,v)$}
Choose a number $n$ of rows and a number $\rho$ of  interactions\;
\tcp{First Stage}
Use a randomized algorithm to construct an $n \times k$ array $A'$\; \label{line:first-stage}
Ensure that $A'$ covers all but at most  $\rho$  interactions\;
Make a list $L$ of interactions that are not covered in $A'$ ($L$ contains at most $\rho$ interactions)\;
\tcp{Second Stage}
Use a deterministic  procedure to add $N-n$ rows to $A'$ to cover all the interactions in $L$\; \label{line:second-stage}
Output $A$\;

\caption{The general two-stage framework for covering array construction.}
\label{algo:general-two-stage}
\end{algorithm}

A specific covering array construction algorithm results by specifying the  randomized  method in the first stage, the  deterministic method in the second stage, and the computation of  $n$ and  $\rho$. 
Any such algorithm  produces a covering array, but we wish to make selections so that the resulting algorithms are practical while still providing a guarantee on the size of the array.
In Section \ref{sec:two-stage} we  describe different algorithms from the two-stage  family, determine the size of the partial array to be constructed in the first stage, and establish upper bound guarantees.
In Section \ref{sec:results} we explore how good the algorithms are in practice. 

\section{Two-stage framework} \label{sec:two-stage}

For the first stage we consider two methods:
\begin{center}
\begin{tabular}{rl}
\rand &the basic randomized algorithm\\
\mt &the Moser-Tardos type algorithm
\end{tabular}
\end{center}
We defer the development of method \mt\ until Section \ref{sec:mt-algo}. Method \rand\ uses a simple variant of Algorithm \ref{algo:rand-slj}, choosing a random $n \times k$ array.

For the second stage we consider four methods:
\begin{center}
\begin{tabular}{rl}
\naive&the na\"ive strategy, one row per uncovered interaction\\
\greedy&the online greedy coloring strategy\\
\den&the density algorithm\\
\color &the graph coloring algorithm
\end{tabular}
\end{center}
Using these abbreviations, 
we adopt a uniform naming convention for the algorithms: $\TS\left<A,B\right>$ is the algorithm in which $A$ is used in the first stage, and $B$ is  used in the second stage.
For example, $\TS\left<\mathrm{\mt,\greedy}\right>$ denotes a two-stage algorithm where the first stage is a Moser-Tardos type randomized algorithm and the second stage is a greedy coloring algorithm. 
Later when the need  arises we refine these algorithm names. 

\subsection{One row per uncovered interaction in the second stage ($\TS\left<\mathrm{\rand,\naive}\right>$)}\label{subsec:simple}
In the second stage  each of the uncovered interactions after the first stage  is covered using a new row. 
Algorithm \ref{algo:two-stage-simple} describes the method in more detail. 

\begin{algorithm}[htbp]
\SetKw{Break}{break}
\KwIn{$t$ : strength of the  covering array, $k$ : number of factors, $v$ : number of levels for each factor}
\KwOut{$A$ : a $\CA(N;t,k,v)$}
Let $n := \frac{\log{k \choose t}+t\log v+\log\log\left(\frac{v^{t}}{v^{t}-1}\right)}{\log\left(\frac{v^{t}}{v^{t}-1}\right)}$\; \label{algline:set-n}
Let $\rho = \frac{1}{\log\left(\frac{v^{t}}{v^{t}-1}\right)}$\; \label{algline:set-mu}

\Repeat {\emph{covered}$=$ true} { \label{algLine:is-enough-covering}
    Let $A$ be an $n \times k$ array where each entry is chosen independently and uniformly at random from a $v$-ary alphabet\;
    Let \emph{uncovNum} $:=0$ and  \emph{unCovList} be an empty list of interactions\;
    Set \emph{covered}$:=$ true\;
    \For {each interaction $\iota \in \mathcal{I}_{t,k,v}$}{ \label{algline:check-Int-loop}
        \If {$\iota$ is not covered in $A$} { \label{algline:check-int-fly}
            Set \emph{uncovNum} $:=$\emph{uncovNum}$+1$\;
            Add $\iota$ to \emph{unCovList}\;
            \If {uncovNum $>\rho$} {
                Set \emph{covered}$:=$ false\;
                \Break\;
            }
        }
    }
}
\For {each interaction $\iota \in $uncovList}{ \label{algline:second-stage}
    Add a row to $A$ that covers $\iota$\;
}
Output $A$\;

\caption{Na\"ive two-stage algorithm ($\TS\left<\mathrm{\rand,\naive}\right>$).}
\label{algo:two-stage-simple}
\end{algorithm}

This simple strategy  improves on the basic randomized strategy when $n$ is chosen judiciously. 
For example, when $t=6,\, k=54$ and $v=3$, Algorithm \ref{algo:rand-slj} constructs a covering array with $17,236$ rows. 
Figure \ref{fig:minima} plots an upper bound on the size of the covering array against the number $n$ of rows in the partial array.
The smallest covering array is obtained when $n=12,402$ which, when completed, yields a covering array with at most $13,162$ rows---a big improvement over Algorithm \ref{algo:rand-slj}.

\begin{figure}[htbp]
\begin{centering}
\includegraphics[bb=100bp 238bp 520bp 553bp,clip,scale=0.8]{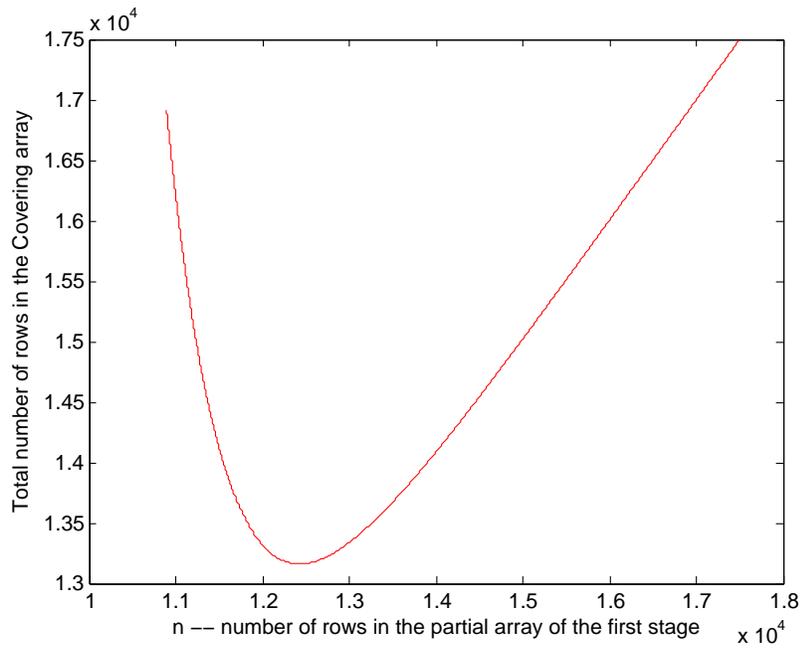}
\par\end{centering}

\caption{An upper bound on the size of the covering array against $n$, the size of the partial array constructed in the first stage when $t=6,k=54$, and $v=3$, with  one new row added per uncovered interaction in the second stage. The minimum size of $13,162$ is obtained when $n=12,402$. Algorithm \ref{algo:rand-slj} requires $17,236$ rows, and the currently best known covering array has $17,197$ rows.}
\label{fig:minima}
\end{figure}

A theorem from \cite{sarkar16} tells us the optimal value of $n$ in  general:
\begin{thm}
\label{thm:two-stage}{\rm \cite{sarkar16}} Let $t,\, k,\, v$ be integers with $k\ge t\ge2$,
and $v\ge2$. Then \[ \CAN(t,k,v)\le\frac{\log{k \choose t}+t\log v+\log\log\left(\frac{v^{t}}{v^{t}-1}\right)+1}{\log\left(\frac{v^{t}}{v^{t}-1}\right)}.\] 
\end{thm}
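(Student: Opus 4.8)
The plan is to obtain the bound as the exact optimum of the na\"ive two-stage algorithm $\TS\left<\rand,\naive\right>$ (Algorithm \ref{algo:two-stage-simple}), by pairing a first-moment estimate for the randomized first stage with a one-variable optimization over the number of first-stage rows $n$. Fix a nonnegative integer $n$ and form an $n\times k$ array by drawing each entry independently and uniformly from $\Sigma$. A fixed $t$-way interaction pins down $t$ coordinates, so a given row covers it with probability $v^{-t}$ and fails to cover it with probability $\frac{v^{t}-1}{v^{t}}$; by independence across rows, the whole array leaves it uncovered with probability $\left(\frac{v^{t}-1}{v^{t}}\right)^{n}$. Since $\lvert\mathcal{I}_{t,k,v}\rvert=\binom{k}{t}v^{t}$, linearity of expectation gives $\binom{k}{t}v^{t}\left(\frac{v^{t}-1}{v^{t}}\right)^{n}$ as the expected number of uncovered interactions, exactly the first-stage quantity the algorithm is designed to control.

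First I would invoke the averaging (first-moment) principle: some $n$-row array leaves at most this expected number of interactions uncovered, and as that count is an integer, at most its floor. The second stage dedicates one new row to each remaining interaction, extending this array to a full covering array, so that for every integer $n$
\[
\CAN(t,k,v)\;\le\;f(n):=n+\binom{k}{t}v^{t}\left(\frac{v^{t}-1}{v^{t}}\right)^{n}.
\]
Writing $q=\frac{v^{t}-1}{v^{t}}$ and $M=\binom{k}{t}v^{t}$, the function $f(n)=n+Mq^{n}$ is convex with $f'(n)=1+Mq^{n}\log q$, so its unique real minimizer $n^{*}$ satisfies $Mq^{n^{*}}=\frac{1}{\log(1/q)}=\frac{1}{\log\frac{v^{t}}{v^{t}-1}}$, which is precisely the value $\rho$ defined in line \ref{algline:set-mu}.

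Next I would solve $q^{n^{*}}=\frac{1}{M\log(1/q)}$ to get $n^{*}=\frac{\log\binom{k}{t}+t\log v+\log\log\frac{v^{t}}{v^{t}-1}}{\log\frac{v^{t}}{v^{t}-1}}$, matching line \ref{algline:set-n}, and substitute back to obtain $f(n^{*})=n^{*}+\frac{1}{\log\frac{v^{t}}{v^{t}-1}}$. Placing both terms over the common denominator $\log\frac{v^{t}}{v^{t}-1}$ yields exactly the claimed expression; the trailing $+1$ in the numerator is nothing but the $\log\frac{v^{t}}{v^{t}-1}$-weight of the single optimal second-stage contribution $\rho=Mq^{n^{*}}$. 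This is the one genuinely informative step: recognizing that at the optimum the number of rows added in the second stage equals $1/\log\frac{v^{t}}{v^{t}-1}$ is what produces the $+1$.

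The delicate point, which I expect to be the main obstacle, is that $n^{*}$ need not be an integer whereas $f$ bounds $\CAN$ only at integer arguments. I would resolve this by evaluating at $n=\lceil n^{*}\rceil$: since $q^{\lceil n^{*}\rceil-n^{*}}\le 1$, the expected (hence integral) number of uncovered interactions stays at most $\rho$, and convexity of $f$ keeps $f(\lceil n^{*}\rceil)$ within an additive $O(1)$ of the real optimum $f(n^{*})$. Because $\CAN(t,k,v)$ is itself an integer and $\rho$ is small relative to $n^{*}$ in the regime of interest, this rounding is absorbed into the stated bound. Everything outside these two points --- the expectation computation and the identification of the optimal second-stage size --- is routine calculus and algebraic simplification.
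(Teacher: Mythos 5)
Your proposal is correct and is essentially the paper's own argument (inherited from \cite{sarkar16}): choose a random $n\times k$ array, use the first moment to find one leaving at most the expected number $\binom{k}{t}v^{t}\left(\frac{v^{t}-1}{v^{t}}\right)^{n}$ of interactions uncovered, add one row per uncovered interaction, and pick $n$ so that this expectation equals $\rho=1/\log\left(\frac{v^{t}}{v^{t}-1}\right)$, which places $n+\rho$ exactly at the stated bound with the $+1$ in the numerator. The only difference is presentational: the paper simply asserts the optimal $n$ and $\rho$, whereas you derive them by minimizing $f(n)=n+Mq^{n}$ and you additionally flag the (harmless, and equally unaddressed in the paper) rounding of $n$ to an integer.
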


The  bound is obtained by setting  $n = \frac{\log{k \choose t}+t\log v+\log\log\left(\frac{v^{t}}{v^{t}-1}\right)}{\log\left(\frac{v^{t}}{v^{t}-1}\right)}$. 
The expected number of uncovered interactions  is exactly $\rho = 1/\log\left(\frac{v^{t}}{v^{t}-1}\right)$.

Figure \ref{fig:comp-slj-dslj-2stage} compares SLJ, discrete SLJ and two-stage bounds for $k \le 100$, when $t=6$ and $v=3$. 
The two-stage bound does not deteriorate in comparison to discrete SLJ bound as $k$ increases; it consistently takes only $307$-$309$ more rows. 
Thus when $k=12$ the two-stage bound requires only $~6\%$ more rows and when $k=100$ only $~2\%$ more rows than the discrete SLJ bound. 

\begin{figure}[htb]
\begin{centering}
\includegraphics[bb=100bp 238bp 520bp 553bp,clip,scale=0.8]{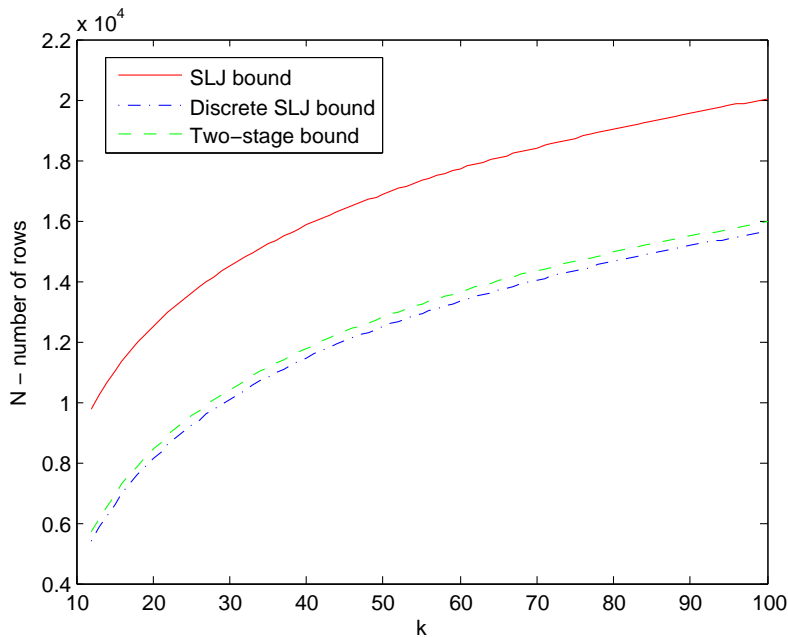}
\par\end{centering}

\caption{Comparison of covering array sizes obtained from SLJ bound, discrete SLJ bound and two-stage bound for $k \le 100$, when $t=6$ and $v=3$. In this range of $k$ values the two-stage bound requires $307$-$309$ more rows than the discrete SLJ bound, that is,  $2$-$6\%$ more rows.}
\label{fig:comp-slj-dslj-2stage}
\end{figure}

To ensure that the loop in line \ref{algline:check-Int-loop} of Algorithm \ref{algo:two-stage-simple} does not repeat too many times we need to know the probability with which a random $n \times k$ array leaves at most $\rho$ interactions uncovered. 
Using Chebyshev's inequality and the second moment method developed in  \cite[Chapter~4]{alon08}, we next show that in a random $n \times k$ array the number of uncovered interactions is almost always close to its expectation, i.e. ${k \choose t}v^t\left(1-\frac{1}{v^t}\right) ^n$. 
Substituting the value of $n$ from line \ref{algline:set-n},  this expected value is equal to $\mu$, as  in line \ref{algline:set-mu}. 
Therefore, the probability that a random $n \times k$ array covers the desired number of interactions is constant, and the expected number of times the loop in line \ref{algline:check-Int-loop} is repeated is also a constant (around $2$ in practice).

Because the theory of the second moment method is developed in considerable detail in \cite{alon08}, here we briefly mention the relevant concepts and results. 
Suppose that $X=\sum_{i=1}^m X_i$, where $X_i$ is the indicator random variable for event $A_i$ for $1\le i \le m$. 
For indices $i,j$, we write $i\sim j$ if $i\neq j$ and the events $A_i,A_j$ are not independent. 
Also suppose that $X_1,\ldots,X_m$ are \emph{symmetric}, i.e. for every $i\neq j$ there is a measure preserving mapping of the underlying probability space that sends event $A_i$ to event $A_j$. 
Define $\Delta^*=\sum_{j\sim i}\Pr\left[A_j|A_i\right]$. Then by  \cite[Corollary~4.3.4]{alon08}:
\begin{lem}
    \label{lem:2-moment}{\rm \cite{alon08}} If $\E[X]\rightarrow\infty$ and $\Delta^*=o(\E[X])$ then $X\sim \E[X]$ almost always.
\end{lem}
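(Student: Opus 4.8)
The plan is to deduce the concentration statement from Chebyshev's inequality once the variance of $X$ has been controlled. Recall that ``$X \sim \E[X]$ almost always'' means that for every fixed $\epsilon > 0$ we have $\Pr\left[\,|X - \E[X]| > \epsilon\,\E[X]\,\right] \to 0$. By Chebyshev's inequality this probability is at most $\mathrm{Var}[X] / (\epsilon^2 \E[X]^2)$, so it suffices to establish that $\mathrm{Var}[X] = o(\E[X]^2)$.

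First I would expand the variance as a sum of covariances, $\mathrm{Var}[X] = \sum_{i,j} \mathrm{Cov}(X_i, X_j)$, and classify the pairs. The diagonal terms satisfy $\mathrm{Var}(X_i) = \Pr[A_i] - \Pr[A_i]^2 \le \Pr[A_i]$, so together they contribute at most $\sum_i \Pr[A_i] = \E[X]$. For $i \neq j$ with $A_i, A_j$ independent the covariance vanishes, while for $i \sim j$ one bounds $\mathrm{Cov}(X_i, X_j) \le \E[X_i X_j] = \Pr[A_i \wedge A_j]$. Summing these contributions gives $\mathrm{Var}[X] \le \E[X] + \Delta$, where $\Delta = \sum_{i \sim j} \Pr[A_i \wedge A_j]$ ranges over ordered dependent pairs.

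The key step is to rewrite $\Delta$ using the symmetry hypothesis. Factoring each joint probability as $\Pr[A_i \wedge A_j] = \Pr[A_i]\,\Pr[A_j \mid A_i]$ and grouping by $i$ yields $\Delta = \sum_i \Pr[A_i]\sum_{j \sim i}\Pr[A_j \mid A_i]$. Because the variables are symmetric, the inner sum $\sum_{j \sim i}\Pr[A_j \mid A_i]$ is independent of $i$ and equals $\Delta^*$ by definition; hence $\Delta = \Delta^* \sum_i \Pr[A_i] = \Delta^*\,\E[X]$. This produces the clean bound $\mathrm{Var}[X] \le \E[X]\,(1 + \Delta^*)$.

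Finally I would divide through by $\E[X]^2$ to obtain $\mathrm{Var}[X]/\E[X]^2 \le 1/\E[X] + \Delta^*/\E[X]$. The first term tends to $0$ since $\E[X] \to \infty$, and the second tends to $0$ by the hypothesis $\Delta^* = o(\E[X])$; therefore $\mathrm{Var}[X] = o(\E[X]^2)$, which by the reduction above completes the argument. I expect the only delicate point to be the symmetry reduction $\Delta = \Delta^*\,\E[X]$: one must verify that the measure-preserving maps sending $A_i$ to $A_j$ really force $\sum_{j \sim i}\Pr[A_j \mid A_i]$ to be independent of the index $i$, so that it can legitimately be pulled out of the outer sum as the single quantity $\Delta^*$. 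Everything else is routine covariance bookkeeping together with a direct application of Chebyshev's inequality.
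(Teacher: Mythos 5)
Your proof is correct and coincides with the argument the paper implicitly relies on: the paper gives no proof of its own, citing Corollary~4.3.4 of Alon--Spencer, and your route---bounding $\mathrm{Var}[X]\le\E[X]+\Delta$ via the covariance decomposition, using symmetry to get $\Delta=\Delta^*\,\E[X]$, and finishing with Chebyshev---is exactly the standard second-moment proof given in that source. Your closing caveat about the symmetry reduction is well placed (and note that even without exact symmetry, taking $\Delta^*=\max_i\sum_{j\sim i}\Pr[A_j\mid A_i]$ yields $\Delta\le\Delta^*\,\E[X]$, which suffices), but no gap remains.
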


In our case, $A_i$ denotes the event that the $i$th interaction is \emph{not} covered in a $n \times k$ array where each entry is chosen independently and uniformly at random from a $v$-ary alphabet. 
Then $\Pr[X_i]=\left(1-\frac{1}{v^t}\right)^n$. 
Because there are ${k \choose t} v^t$ interactions in total, by linearity of expectation, $\E[X]={k \choose t}v^t\left(1-\frac{1}{v^t}\right)^n$, and $\E[X]\rightarrow\infty$ as $k\rightarrow \infty$.

Distinct events $A_i$ and $A_j$ are independent if the $i$th and $j$th interactions share no column. 
Therefore, the event $A_i$ is not independent of at most $t{k \choose t-1}$ other events $A_j$. 
So $\Delta^*=\sum_{j\sim i}\Pr\left[A_j|A_i\right] \le \sum_{j\sim i}1\le t{k \choose t-1}= o(\E[X])$ when $v$ and $t$ are constants. 
By Lemma \ref{lem:2-moment},  the number of uncovered interactions in a random $n \times k$ array is close to the expected number of uncovered interactions. 
This guarantees that Algorithm \ref{algo:two-stage-simple} is an efficient randomized algorithm for constructing covering arrays with a number of rows upper bounded by Theorem \ref{thm:two-stage}.

In keeping  with the general two-stage framework, Algorithm \ref{algo:two-stage-simple} does not store the coverage status of each interaction. 
We only need  store the interactions that are uncovered in $A$, of which there are at most $\rho = \frac{1}{\log\left(\frac{v^{t}}{v^{t}-1}\right)} \approx v^t$. 
This quantity depends only on $v$ and $t$ and is independent of $k$, so is effectively a constant that is much smaller than ${k \choose t}  v^t$, the storage requirement for the density algorithm. 
Hence the algorithm  can be applied to a  higher range of $k$ values. 

Although Theorem \ref{thm:godbole} provides asymptotically tighter bounds than Theorem \ref{thm:two-stage}, in a range of $k$ values that are relevant for practical application, Theorem \ref{thm:two-stage} provides better results.
Figure \ref{fig:comp-lll-2stage-best} compares the bounds on $\CAN(t,k,v)$   with the currently best known results. 

\begin{figure}[htb]
    \centering
    \begin{subfigure}[b]{0.4\textwidth}
        \includegraphics[bb=100bp 238bp 520bp 553bp,clip,scale=0.5]{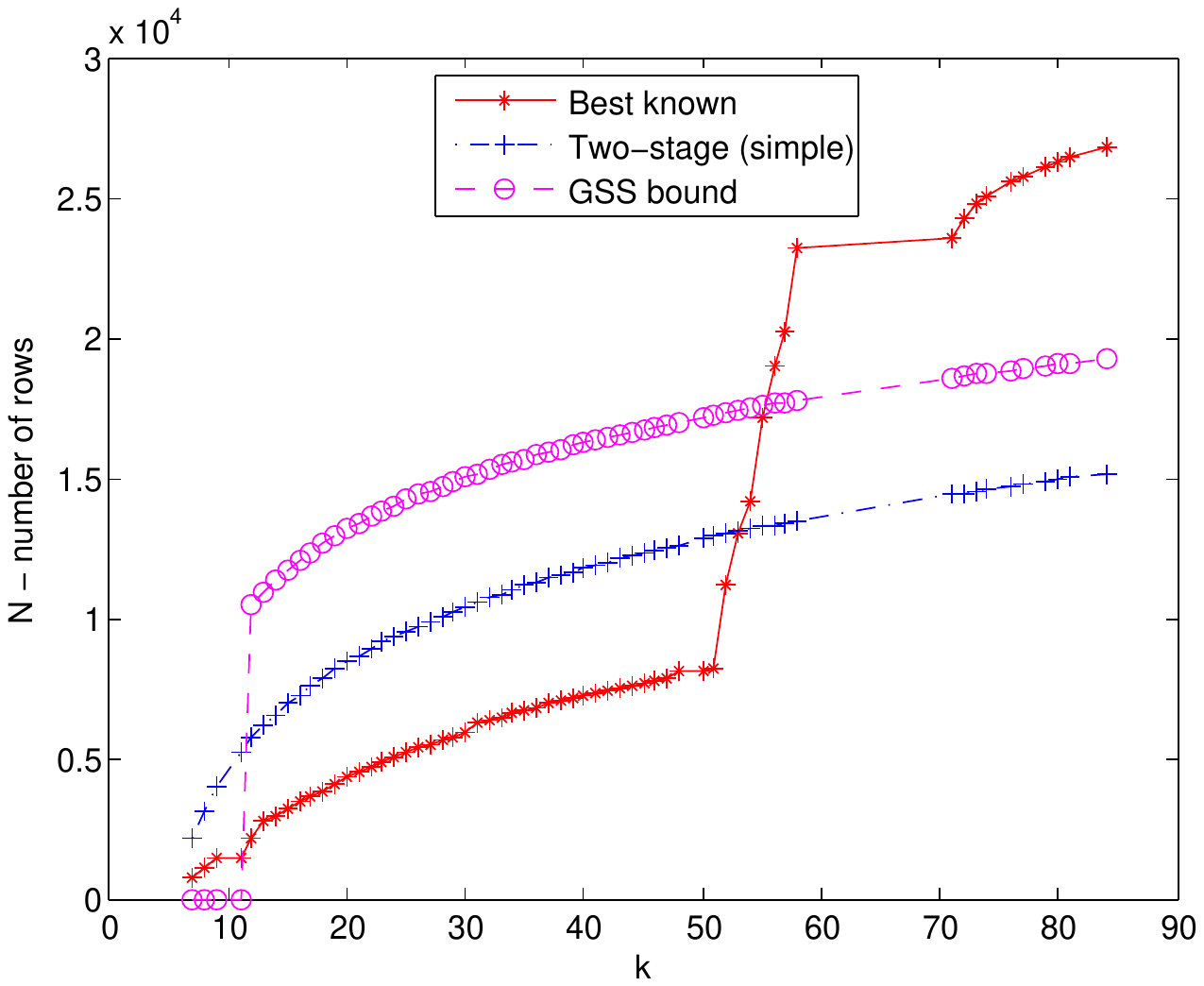}
        \caption{$t=6,\,v=3$}
    \end{subfigure}
    \quad 
    \begin{subfigure}[b]{0.4\textwidth}
        \includegraphics[bb=100bp 238bp 520bp 553bp,clip,scale=0.5]{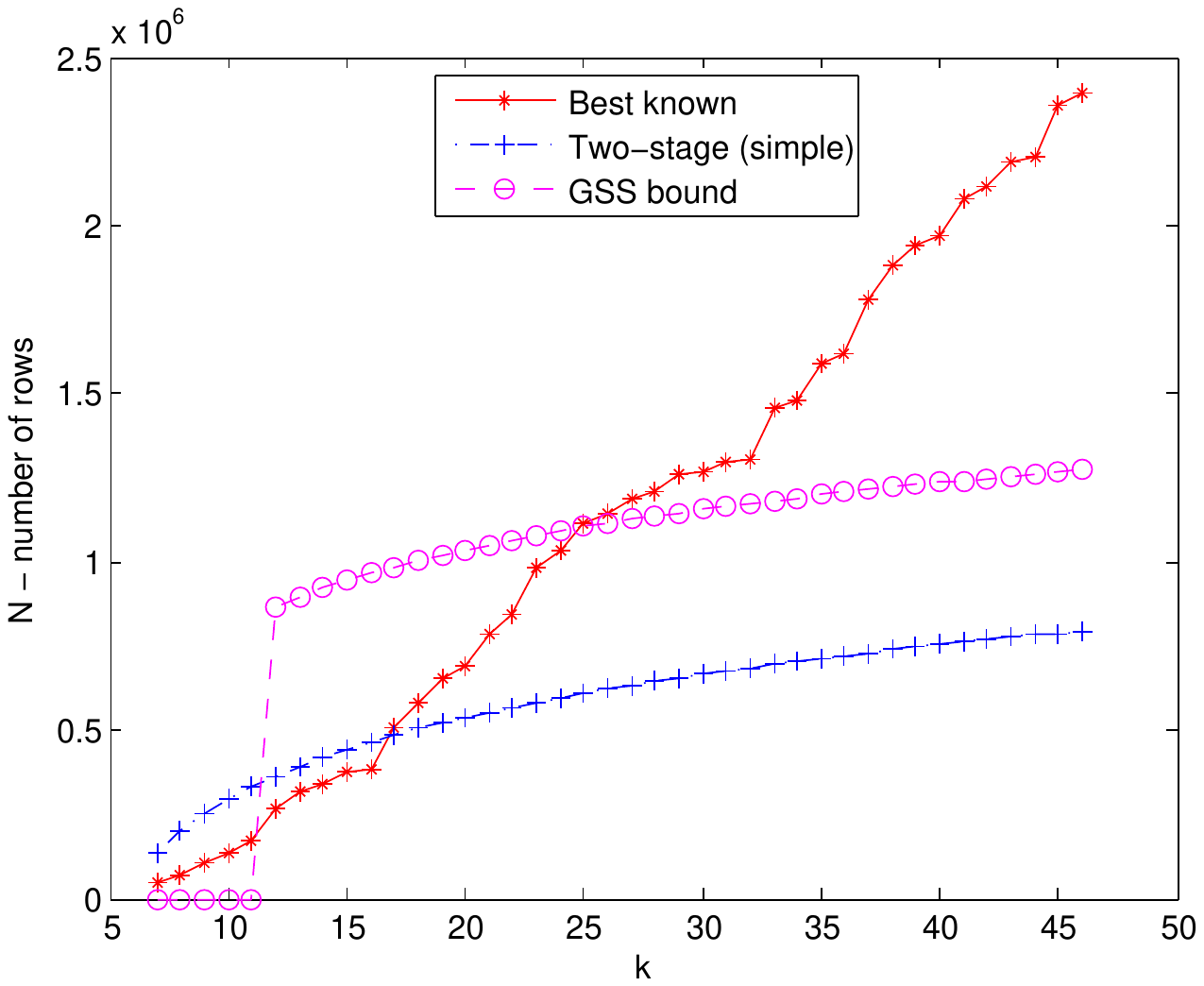}
        \caption{$t=6,\,v=6$}
    \end{subfigure}
    \caption{Comparison of GSS bound and two-stage bound with the currently best known results}\label{fig:comp-lll-2stage-best}
\end{figure}

\subsection{The density algorithm in the second stage ($\TS\left<\mathrm{\rand,\den}\right>$)}\label{subsec:density}
Next we  apply the density algorithm in the second stage. 
Figure \ref{fig:minima-comp} plots an upper bound on the size of the covering array against the size of the partial array constructed in the first stage when the density algorithm is used in the second stage, and compares it with $\TS\left<\mathrm{\rand,\naive}\right>$. 
The size of the covering array decreases as $n$ decreases. 
This is expected because with smaller partial arrays, more interactions remain for the second stage to be covered  by the density algorithm. 
In fact if we cover all the interactions using the density algorithm (as when $n=0$) we would get an even smaller covering array. 
However, our motivation was precisely to avoid doing  that. 
Therefore, we need  a "cut-off" for the first stage. 

\begin{figure}[htbp]
\begin{centering}
    \includegraphics[bb=100bp 238bp 520bp 553bp,clip,scale=0.8]{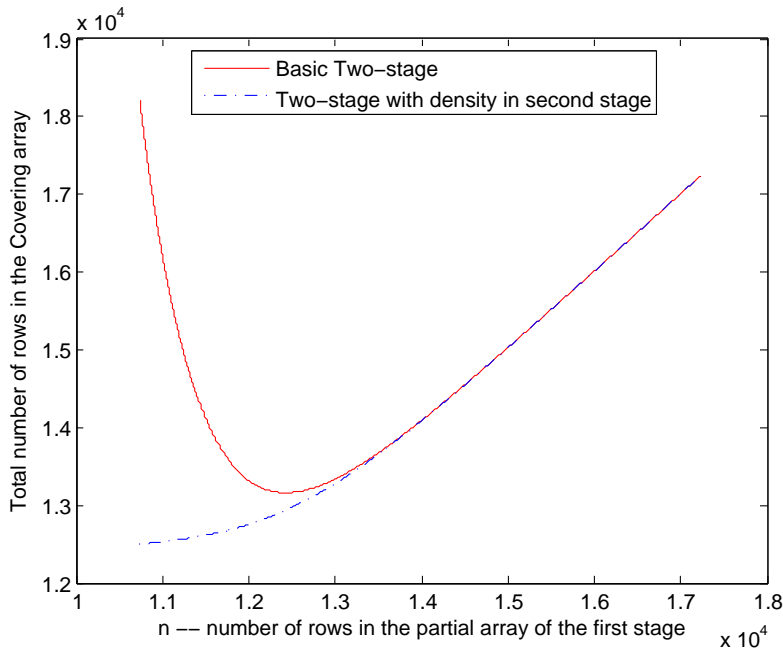}
    \par\end{centering}
    
    \caption{Comparison of covering array sizes  from two-stage algorithms with \den\ and \naive\ in the second stage. 
  With \den\ there is no minimum point in the curve; the size of the covering array keeps decreasing as we leave more  uncovered interactions for  the second stage.}
    \label{fig:minima-comp}
\end{figure}

We are presented with a trade-off. 
If we construct a smaller partial array in the first stage,  we obtain a smaller covering array overall. 
But we then pay for more storage and computation time for  the second stage. 
To appreciate the nature of this trade-off,  look at Figure \ref{fig:emp-cutoff}, 
which plots an upper bound on the covering array size and the number of uncovered interactions in the first stage against $n$. 
The improvement in the covering array size plateaus after a certain point. 
The three horizontal lines indicate $\rho\,(\approx v^t)$, $2\rho$ and $3\rho$ uncovered interactions in the first stage. 
(In the na\"ive method of Section \ref{subsec:simple}, the partial array after the first stage leaves at most $\rho$ uncovered interactions.)
In Figure \ref{fig:emp-cutoff} the final covering array size appears to plateau when the number of uncovered interactions left by the first stage is around $2\rho$. 
After that we see diminishing returns --- the density algorithm needs to cover  more interactions in return for a smaller improvement in the covering array size. 

Let $r$ be the maximum number of interactions allowed to remain uncovered after the first stage. 
Then $r$ can be specified in the  two-stage algorithm. 
To accommodate this, we  denote by  $\TS\left<A,B;r\right>$ the two-stage algorithm where $A$ is the first stage strategy, $B$ is the second stage strategy,  and $r$ is the maximum number of uncovered interactions after the first stage. 
For example, $\TS\left<\mathrm{\rand,\den};2\rho\right>$ applies the basic randomized algorithm in the first stage to cover all but at most $2\rho$ interactions, and the density algorithm to cover the remaining interactions in the second stage.

\begin{figure}[htbp]
\begin{centering}
    \includegraphics[bb=100bp 238bp 520bp 553bp,clip,scale=0.8]{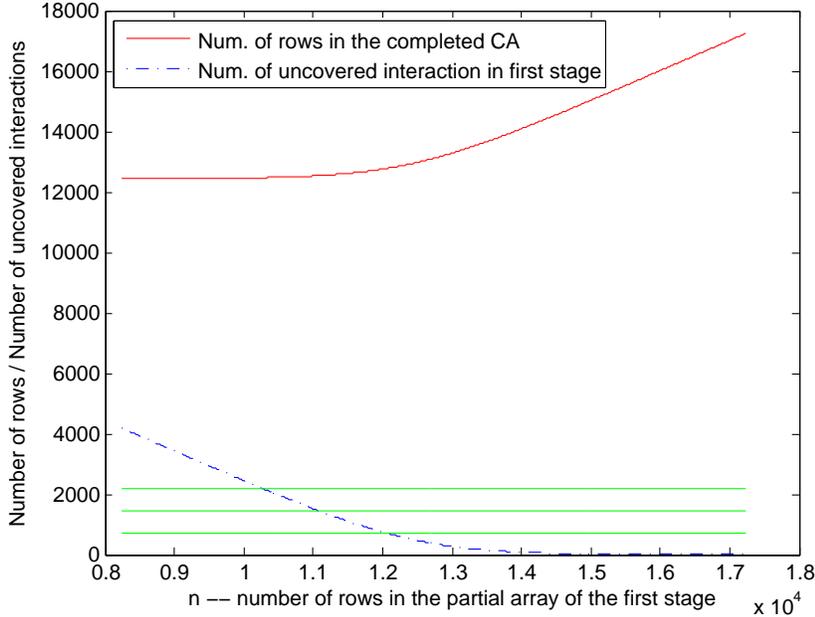}
    \par\end{centering}
    
    \caption{Final covering array size against the number of uncovered interactions after the first stage. As the size $n$ of the partial array  decreases, the number of uncovered interactions in the first stage increases. \den\ is used in the second stage. From bottom to top, the  green lines denote $\rho$, $2\rho$, and $3\rho$  uncovered interactions.}
    \label{fig:emp-cutoff}
\end{figure}

\subsection{Coloring in the second stage ($\TS\left<\mathrm{\rand,\color}\right>$ and $\TS\left<\mathrm{\rand,\greedy}\right>$)}\label{subsec:coloring}

Now we describe strategies using graph coloring   in the second stage.
Construct a graph $G=(V,E)$, the \emph{incompatibility graph}, in which $V$ is the set of uncovered
interactions, and there is an edge between two interactions exactly when
they share a column in which they have different symbols.
A single row can cover a set of interactions if and only if it forms an independent set in $G$.  
Hence the minimum number of rows required to cover all interactions of $G$ is exactly its chromatic number $\chi(G)$, the minimum number of colors in a proper coloring of $G$.
Graph coloring is an NP-hard problem, so we employ heuristics to bound the chromatic number.
Moreover, $G$ only has vertices for the \emph{uncovered} interactions after the first stage, so is size is small relative to the total number of interactions.

The expected number of edges in the  incompatibility graph after choosing $n$ rows uniformly at random is  $\gamma=\left(\frac{1}{2}\right) {k \choose t} v^{t}  \sum_{i=1}^{t}{t \choose i}{k-t \choose t-i}(v^{t}-v^{t-i}) \left(1-\frac{1}{v^{t}}\right)^{n} \left(1-\frac{1}{(v^{t}-v^{t-i})}\right)^{n}$.
Using the elementary upper bound on the chromatic number $\chi\le\frac{1}{2}+\sqrt{2m+\frac{1}{4}}$, where $m$ is the number of edges \cite[Chapter~5.2]{diestel10}, 
we can surely cover the remaining interactions with at most $\frac{1}{2}+\sqrt{2m+\frac{1}{4}}$ rows. 

The actual number of edges $m$ that remain after the first stage is a random variable with mean $\gamma$.
In principle, the first stage could be repeatedly applied until $m \leq \gamma$, so we call $m=\gamma$ the  \emph{optimistic estimate}.
To ensure that the first stage is expected to be run a small constant number of times, we increase the estimate.
With probability more than $1/2$ the  incompatibility graph has  $m\le2\gamma$ edges, so $m=2\gamma$ is the \emph{conservative estimate}.

For $t=6,\, k=56,$ and $v=3$, 
Figure \ref{fig:N-vs-n} shows the effect on the minimum number of rows when the bound on the chromatic number in the second stage is used, for the conservative or optimistic estimates.  The Na\"ive method is plotted for comparison.
Better coloring bounds shift the minima leftward, reducing the number of rows produced in both stages.

\begin{figure}[htbp]
\begin{centering}
    \includegraphics[bb=100bp 238bp 520bp 553bp,clip,scale=0.8]{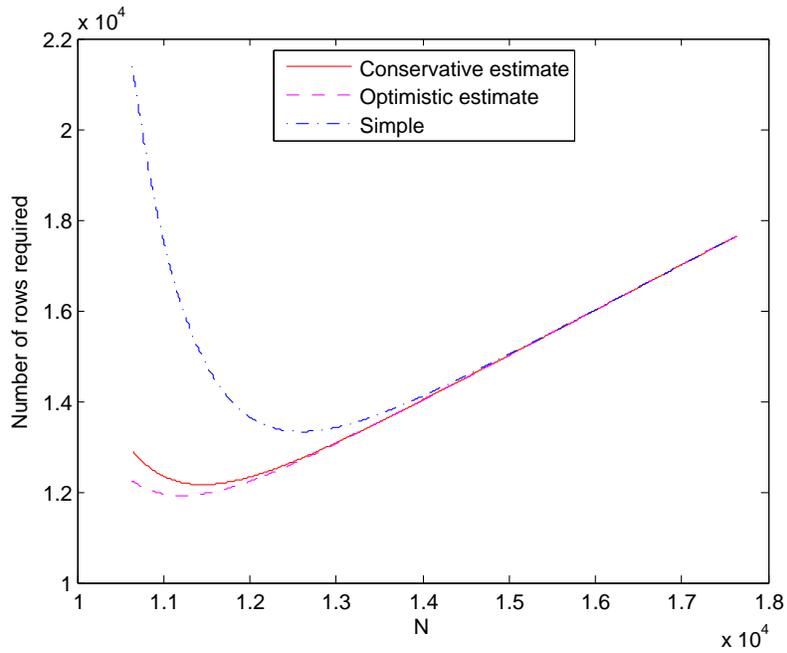}
    \par\end{centering}
    
    \caption{Size of the partial array vs. size of the complete CA. $t=6,\, k=56,\, v=3$. Stein-Lovász-Johnson bound requires $17,403$ rows, discrete Stein-Lovász-Johnson bound requires $13,021$ rows. Simple estimate for the two stage algorithm is $13,328$ rows, conservative estimate assuming $m=2\gamma$ is $12,159$ rows, and optimistic estimate assuming $m=\gamma$ is $11,919$ rows. Even the conservative estimate beats the discrete Stein-Lovász-Johnson bound.}
    \label{fig:N-vs-n}
\end{figure}

Thus far we have considered bounds on the chromatic number.  
Better estimation of $\chi(G)$ is complicated by the fact that we do not have much information about the structure of $G$ until the first stage is run. 
In practice, however, $G$ is known after the first stage and hence an algorithmic method to bound its chromatic number can be applied.  
Because the number of vertices in $G$ equals the number of uncovered interactions after the first stage, we encounter the same trade-off between time and storage, and final array size, as seen earlier for density.  
Hence we again parametrize  by the expected number of uncovered interactions in the first stage.

We employ two different greedy algorithms to color the incompatibility graph. 
In method  \color\   we first construct the  incompatibility graph $G$ after  the first stage. 
Then we apply the commonly used \emph{smallest last order} heuristic to order the vertices for greedy coloring: At each stage, find a vertex $v_i$ of minimum degree in $G_i$, order the vertices of $G_i-v_i$, and then place $v_i$ at the end. 
More precisely, we order the vertices of $G$ as $v_1,v_2, \ldots, v_n$, such that $v_i$ is a vertex of minimum degree in $G_i$, where $G_i = G - \{v_{i+1}, \ldots, v_n \}$. 
A graph is  \emph{$d$-degenerate} if all of its subgraphs have a vertex with degree at most $d$. 
When $G$ is $d$-degenerate but not $(d-1)$-degenerate, the \emph{Coloring number} $\col(G)$ is $d+1$. 
If we  then greedily color the vertices with the first available color, at most $\col(G)$ colors are used.

In  method  \greedy\  we employ an on-line, greedy approach that colors the interactions as they are discovered in the first stage.
In this way, the incompatibility graph is never constructed.
We instead maintain a set of rows.
Some entries in  rows are fixed to a specific value;  others are flexible to take on any value.
Whenever a new interaction is found to be uncovered in the first stage, we check if any of the rows  is compatible with this interaction.
If  such a row is found then  entries in the row are fixed so that the row now covers the interaction.
If no such row exists,  a new row with exactly $t$ fixed entries corresponding to the interaction is added to the set of rows.
This method is much faster than method \color\  in practice.

\subsection{Using group action}\label{subsec:group}

Covering arrays  that are invariant under the action of a permutation group on their symbols can be easier to construct and are often smaller \cite{ColCECA}.
Direct and computational constructions using group actions are explored in \cite{cck,MeagherS}.
Sarkar et al. \cite{sarkar16} establish the asymptotically tightest known bounds on $\CAN(t,k,v)$ using group actions. 
In this section we explore the implications of group actions on  two-stage algorithms.

Let $\Gamma$ be a permutation group on the set of symbols. 
The action of this group partitions the set of  $t$-way interactions into  orbits. 
We construct an array $A$ such that for every orbit, at least one row covers an interaction from that orbit. 
Then we develop the rows of $A$ over $\Gamma$ to obtain a covering array  that is invariant under the action of $\Gamma$.
Effort then focuses on covering all the  orbits of $t$-way interactions, instead of the individual interactions.

If $\Gamma$ acts sharply transitively on the set of symbols (for example, if $\Gamma$ is a cyclic group of order $v$) then the action of $\Gamma$ partitions ${k \choose t}  v^t$ interactions into ${k \choose t}  v^{t-1}$ orbits of length $v$ each. 
Following the lines of the proof of Theorem \ref{thm:two-stage}, there exists an $n \times k$ array with $n =\frac{\log{k \choose t}+(t-1)\log v+\log\log\left(\frac{v^{t-1}}{v^{t-1}-1}\right)+1}{\log\left(\frac{v^{t-1}}{v^{t-1}-1}\right)}$ that covers at least one interaction from each orbit. 
Therefore, 
\begin{equation}\label{eq:cyclic}
\CAN(t,k,v) \le v\frac{\log{k \choose t}+(t-1)\log v+\log\log\left(\frac{v^{t-1}}{v^{t-1}-1}\right)+1}{\log\left(\frac{v^{t-1}}{v^{t-1}-1}\right)}.
\end{equation}

\begin{figure}[htbp]
\begin{centering}
    \includegraphics[bb=100bp 238bp 520bp 553bp,clip,scale=0.8]{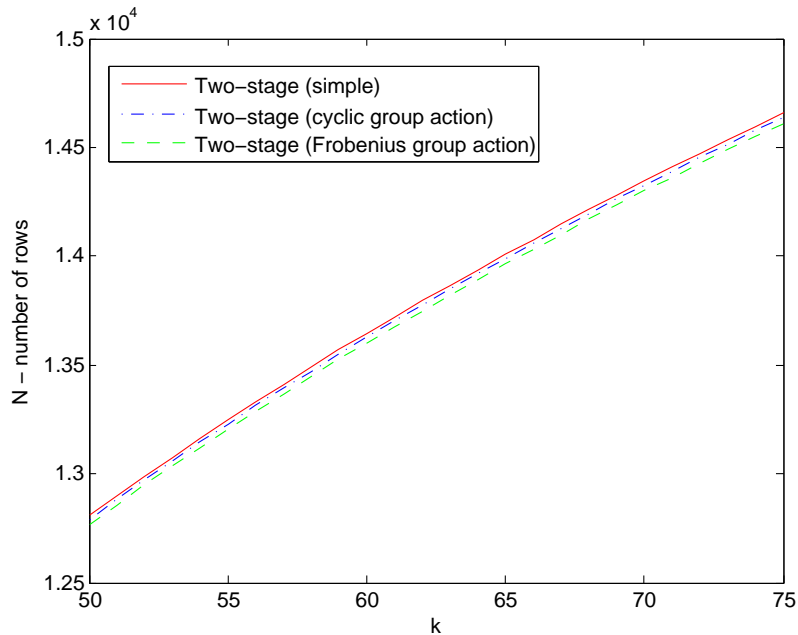}
    \par\end{centering}
    
    \caption{Comparison of the simple two-stage bound with the cyclic and the Frobenius   two-stage bounds. $t=6,\,v=3$ and $50\le k \le 75$. Group action reduces the required number of rows slightly.}
    \label{fig:comp-2stage}
\end{figure}

Similarly, we can  employ  a Frobenius group.
When $v$ is a prime power, the \emph{Frobenius group} is the group of permutations of $\mathbb{F}_v$ of the form $\{x \mapsto ax+b\,:\,a,b\in \mathbb{F}_v,\,a\neq0\}$.
The action of the Frobenius group partitions the set of $t$-tuples on $v$ symbols  into $\frac{v^{t-1}-1}{v-1}$ orbits of length $v(v-1)$ (full orbits) each and $1$ orbit of length $v$ (a short orbit).
The short orbit consists of tuples of the form $(x_1,\ldots,x_t)$ where $x_1=\ldots=x_t$.
Therefore, we can obtain a covering array by first constructing an array that covers all the full orbits, and then developing all the rows over the Frobenius group and adding $v$ constant rows.
Using the two stage strategy in conjunction with the Frobenius group action we obtain:

\begin{equation}\label{eq:frobenius}
\CAN(t,k,v) \le v(v-1)\frac{\log{k \choose t}+ \log\left(\frac{v^{t-1}-1}{v-1}\right) +\log\log\left(\frac{v^{t-1}}{v^{t-1}-v+1}\right)+1}{\log\left(\frac{v^{t-1}}{v^{t-1}-v+1}\right)} + v.
\end{equation}

Figure \ref{fig:comp-2stage} compares the simple two-stage bound with the cyclic and  Frobenius  two-stage bounds. 
For $t=6,\,v=3$ and $12\le k \le 100$, the cyclic bound requires $7$-$21$ (on average $16$) fewer rows than the simple bound.
In the same range the Frobenius bound requires $17-51$  (on average $40$) fewer rows.

Group action can be applied in other methods for the second stage as well. 
Colbourn \cite{ColCECA} incorporates group action into the density algorithm, allowing us to apply method \den\  in the second stage.

\greedy\  extends  easily to use group action, as we do not  construct an explicit incompatibility graph.
Whenever we fix entries in a row to cover an uncovered orbit,  we commit to a specific orbit representative. 

However, applying group action to the incompatibility graph coloring for \color\ is more complicated.
We need to modify the definition of the incompatibility graph for two reasons.
First the vertices no longer  represent  uncovered interactions, but rather uncovered orbits of interaction.
Secondly, and perhaps more importantly, pairwise compatibility between every two orbits in a set no longer implies mutual compatibility among all orbits in  the set.  

One approach is to form a vertex for each uncovered orbit, placing an edge between two when they share a column.  
Rather than the usual coloring, however, one asks for a partition of the vertex set into classes so that every class induces an acyclic subgraph.  
Problems of this type are \emph{generalized graph coloring} problems \cite{Brown96}.
Within each  class of such a vertex partition, consistent representatives of each orbit can be selected to form a row; when a cycle is present, this may not be possible.
Unfortunately, heuristics for solving these types of problems appear to be weak, so we adopt a second approach.
As we build the incompatibility graph, we commit to specific orbit representatives. 
When a vertex for an uncovered orbit is added, we check its compatibility with the orbit representatives chosen for the orbits already handled with which it shares columns; 
we commit to an orbit representative and add edges to those with which it is now incompatible.
Once completed, we have a (standard) coloring problem for the resulting graph.

Because group action can be applied using each of the methods for the two stages,  we extend our naming  to $\TS\left<A,B;r,\Gamma\right>$, where $\Gamma$ can be  \T\   (i.e. no group action), \C, or \F.

\section{Computational results}\label{sec:results}

Figure \ref{fig:comp-lll-2stage-best} indicates that even a simple two-stage bound can improve on best known covering array numbers. 
Therefore we investigate the actual performance of our two-stage algorithms for covering arrays of strength $5$ and $6$. 

First we present results for $t=6$, when $v\in\{3,4,5,6\}$ and no group action is assumed.
Table \ref{tab:t-1-6} shows the results for different $v$ values.
In each case we select the range of $k$ values where the two-stage bound predicts smaller covering arrays than the previously known best ones, setting the maximum number of uncovered interactions as $\rho = 1/\log\left(\frac{v^{t}}{v^{t}-1}\right) \approx v^t$.
For each value of $k$ we construct a single partial array and then run the different second stage algorithms on it consecutively. 
In this way all the second stage algorithms cover the same set of uncovered interactions.

The column  \textbf{tab} lists the best known $\CAN(t,k,v)$ upper bounds from \cite{catables}.
The column  \textbf{bound} shows the upper bounds obtained from the two-stage bound (\ref{thm:two-stage}).
The columns \textbf{na\"ive}, \textbf{greedy}, \textbf{col} and  \textbf{den} show results obtained from running the $\TS\left<\rand,\naive;\rho,\T\right>$, $\TS\left<\rand, \greedy;\rho,\T\right>$, $\TS\left<\rand, \color;\rho,\T\right>$ and $\TS\left<\rand,\den;\rho,\T\right>$ algorithms, respectively.

The na\"ive method always finds a covering array that is smaller than the two-stage bound. 
This happens because we repeat the first stage of Algorithm \ref{algo:two-stage-simple} until the array has fewer than $v^t$ uncovered interactions.
(If the first stage were not repeated,  the algorithm  still produce covering arrays that are not too far from the bound.)
For $v=3$ \greedy\ and \den\  have comparable performance.
Method \color\  produces covering arrays that are smaller.
However, for $v\in \{4,5,6\}$ \den\ and \color\ are competitive.

\begin{table}[htbp]
\begin{centering}
\begin{tabular}{c|c|c|c|c|c|c}
\hline 
$k$ & tab & bound & na\"ive & greedy & col & den\tabularnewline
\hline 
\multicolumn{7}{c}{$t=6,v=3$}\\
\hline 
\hline 
53 & 13021 & 13076 & 13056 & 12421 & 12415 & 12423\tabularnewline
\hline 
54 & 14155 & 13162 & 13160 & 12510 & 12503 & 12512\tabularnewline
\hline 
55 & 17161 & 13246 & 13192 & 12590 & 12581 & 12591\tabularnewline
\hline 
56 & 19033 & 13329 & 13304 & 12671 & 12665 & 12674\tabularnewline
\hline 
57 & 20185 & 13410 & 13395 & 12752 & 12748 & 12757\tabularnewline
\hline 
\hline 
\multicolumn{7}{c}{$t=6,v=4$}\\
\hline 
%$k$ & tab & bound & na\"ive & greedy & col & den\tabularnewline
%\hline 
\hline 
39 & 68314 & 65520 & 65452 & 61913 & 61862 & 61886\tabularnewline
\hline 
40 & 71386 & 66186 & 66125 & 62573 & 62826 & 62835\tabularnewline
\hline 
41 & 86554 & 66834 & 66740 & 63209 & 63160 & 63186\tabularnewline
\hline 
42 & 94042 & 67465 & 67408 & 63819 & 64077 & 64082\tabularnewline
\hline 
43 & 99994 & 68081 & 68064 & 64438 & 64935 & 64907\tabularnewline
\hline 
44 & 104794 & 68681 & 68556 & 65021 & 65739 & 65703\tabularnewline
\hline 
\hline 
\multicolumn{7}{c}{$t=6,v=5$}\\
\hline 
%$k$ & tab & bound & na\"ive & greedy & col & den\tabularnewline
%\hline 
\hline 
31 & 233945 & 226700 & 226503 & 213244 & 212942 & 212940\tabularnewline
\hline 
32 & 258845 & 229950 & 229829 & 216444 & 217479 & 217326\tabularnewline
\hline 
33 & 281345 & 233080 & 232929 & 219514 & 219215 & 219241\tabularnewline
\hline 
34 & 293845 & 236120 & 235933 & 222516 & 222242 & 222244\tabularnewline
\hline 
35 & 306345 & 239050 & 238981 & 225410 & 226379 & 226270\tabularnewline
\hline 
36 & 356045 & 241900 & 241831 & 228205 & 230202 & 229942\tabularnewline
\hline 
\hline 
\multicolumn{7}{c}{$t=6,v=6$}\\
\hline 
%$k$ & tab & bound & na\"ive & greedy & col & den\tabularnewline
%\hline 
\hline 
17 & 506713 & 486310 & 486302 & 449950 & 448922 & 447864\tabularnewline
\hline 
18 & 583823 & 505230 & 505197 & 468449 & 467206 & 466438\tabularnewline
\hline 
19 & 653756 & 522940 & 522596 & 485694 & 484434 & 483820\tabularnewline
\hline 
20 & 694048 & 539580 & 539532 & 502023 & 500788 & 500194\tabularnewline
\hline 
21 & 783784 & 555280 & 555254 & 517346 & 516083 & 515584\tabularnewline
\hline 
22 & 844834 & 570130 & 569934 & 531910 & 530728 & 530242\tabularnewline
\hline 
23 & 985702 & 584240 & 584194 & 545763 & 544547 & 548307\tabularnewline
\hline 
24 & 1035310 & 597660 & 597152 & 558898 & 557917 & 557316\tabularnewline
\hline 
25 & 1112436 & 610460 & 610389 & 571389 & 570316 & 569911\tabularnewline
\hline 
26 & 1146173 & 622700 & 622589 & 583473 & 582333 & 582028\tabularnewline
\hline 
27 & 1184697 & 634430 & 634139 & 594933 & 593857 & 593546\tabularnewline
\hline 
\hline
\end{tabular}
\par\end{centering}

\caption{Comparison of different $\TS\left<\rand,-;\rho,\T\right>$ algorithms.}
\label{tab:t-1-6}
\end{table}

Table \ref{tab:t-2-6}  shows the results obtained by the different second stage algorithms when the maximum number of uncovered interactions in the first stage is set to $2\rho$ and $3\rho$ respectively.
When more interactions are covered in the second stage, we obtain smaller arrays as expected.
However, the improvement in size does not approach $50\%$.
There is no clear winner.

\begin{table}[htbp]
\begin{centering}
\begin{tabular}{c|c|c|c||c|c|c}
\hline 
$k$ & \multicolumn{1}{c}{} & \multicolumn{1}{c}{$2\rho$} &  & \multicolumn{1}{c}{} & \multicolumn{1}{c}{$3\rho$} & \tabularnewline
\cline{2-7} 
 & greedy & col & den & greedy & col & den\tabularnewline
\hline 
\hline 
\multicolumn{7}{c}{$t=6,v=3$}\\
\hline 
\hline 
53 & 11968 & 11958 & 11968 & 11716 & 11705 & 11708\tabularnewline
\hline 
54 & 12135 & 12126 & 12050 & 11804 & 11787 & 11790\tabularnewline
\hline 
55 & 12286 & 12129 & 12131 & 11877 & 11875 & 11872\tabularnewline
\hline 
56 & 12429 & 12204 & 12218 & 11961 & 12055 & 11950\tabularnewline
\hline 
57 & 12562 & 12290 & 12296 & 12044 & 12211 & 12034\tabularnewline
\hline 
\multicolumn{7}{c}{$t=6,v=4$}\\
\hline 
\hline 
39 & 59433 & 59323 & 59326 & 58095 & 57951 & 57888\tabularnewline
\hline 
40 & 60090 & 60479 & 59976 & 58742 & 58583 & 58544\tabularnewline
\hline 
41 & 60715 & 61527 & 60615 & 59369 & 59867 & 59187\tabularnewline
\hline 
42 & 61330 & 62488 & 61242 & 59974 & 61000 & 59796\tabularnewline
\hline 
43 & 61936 & 61839 & 61836 & 60575 & 60407 & 60393\tabularnewline
\hline 
44 & 62530 & 62899 & 62428 & 61158 & 61004 & 60978\tabularnewline
\hline 
\multicolumn{7}{c}{$t=6,v=5$}\\
\hline 
\hline 
31 & 204105 & 203500 & 203302 & 199230 & 198361 & 197889\tabularnewline
\hline 
32 & 207243 & 206659 & 206440 & 202342 & 201490 & 201068\tabularnewline
\hline 
33 & 210308 & 209716 & 209554 & 205386 & 204548 & 204107\tabularnewline
\hline 
34 & 213267 & 212675 & 212508 & 208285 & - & 207060\tabularnewline
\hline 
35 & 216082 & 215521 & 215389 & 211118 & - & 209936\tabularnewline
\hline 
36 & 218884 & 218314 & 218172 & 213872 & - & 212707\tabularnewline
\hline 
\multicolumn{7}{c}{$t=6,v=6$}\\
\hline 
\hline 
17 & 425053 & - & 420333 & 412275 & - & 405093\tabularnewline
\hline 
18 & 443236 & - & 438754 & 430402 & - & 423493\tabularnewline
\hline 
19 & 460315 & - & 455941 & 447198 & - & 440532\tabularnewline
\hline 
20 & 476456 & - & 472198 & 463071 & - & 456725\tabularnewline
\hline 
21 & 491570 & - & 487501 & 478269 & - & 471946\tabularnewline
\hline 
22 & 505966 & - & 502009 & 492425 & - & 486306\tabularnewline
\hline 
23 & 519611 & - & 515774 & 505980 & - & 500038\tabularnewline
\hline 
24 & 532612 & - & 528868 & 518746 & - & 513047\tabularnewline
\hline 
25 & 544967 & - & 541353 & 531042 & - & 525536\tabularnewline
\hline 
26 & 556821 & - & 553377 & 542788 & - & 537418\tabularnewline
\hline 
27 & 568135 & - & 564827 & 554052 & - & 548781\tabularnewline
\hline 
\end{tabular}
\par\end{centering}

\caption{Comparison of  $\TS\left<\rand,-;2\rho,\T\right>$  and $\TS\left<\rand,-;3\rho,\T\right>$ algorithms.}
\label{tab:t-2-6}
\end{table}

Next we investigate the covering arrays that are invariant under the action of a cyclic group.
In Table \ref{tab:c-1-6} the column  \textbf{bound} shows the upper bounds from Equation (\ref{eq:cyclic}).
The columns \textbf{na\"ive}, \textbf{greedy}, \textbf{col} and  \textbf{den} show results obtained from running  $\TS\left<\rand,\naive;\rho,\C\right>$, $\TS\left<\rand, \greedy;\rho,\C\right>$, $\TS\left<\rand, \color;\rho,\C\right>$ and $\TS\left<\rand,\den;\rho,\C\right>$, respectively.

\begin{table}[htbp]
\begin{centering}
\begin{tabular}{c|c|c|c|c|c|c}
\hline 
$k$ & tab & bound & na\"ive & greedy & col & den\tabularnewline
\hline 
\hline 
\multicolumn{7}{c}{$t=6,v=3$}\\
\hline 
\hline 
53 & 13021 & 13059 & 13053 & 12405 & 12405 & 12411\tabularnewline
\hline 
54 & 14155 & 13145 & 13119 & 12489 & 12543 & 12546\tabularnewline
\hline 
55 & 17161 & 13229 & 13209 & 12573 & 12663 & 12663\tabularnewline
\hline 
56 & 19033 & 13312 & 13284 & 12660 & 12651 & 12663\tabularnewline
\hline 
57 & 20185 & 13393 & 13368 & 12744 & 12744 & 12750\tabularnewline
\hline 
\hline 
\multicolumn{7}{c}{$t=6,v=4$}\\
\hline 
\hline 
$k$ & tab & bound & na\"ive & greedy & col & den\tabularnewline
\hline 
\hline 
39 & 68314 & 65498 & 65452 & 61896 & 61860 & 61864\tabularnewline
\hline 
40 & 71386 & 66163 & 66080 & 62516 & 62820 & 62784\tabularnewline
\hline 
41 & 86554 & 66811 & 66740 & 63184 & 63144 & 63152\tabularnewline
\hline 
42 & 94042 & 67442 & 67408 & 63800 & 63780 & 63784\tabularnewline
\hline 
43 & 99994 & 68057 & 68032 & 64408 & 64692 & 64680\tabularnewline
\hline 
44 & 104794 & 68658 & 68556 & 64988 & 64964 & 64976\tabularnewline
\hline 
\hline 
\multicolumn{7}{c}{$t=6,v=5$}\\
\hline 
\hline 
31 & 226000 & 226680 & 226000 & 213165 & 212945 & 212890\tabularnewline
\hline 
32 & 244715 & 229920 & 229695 & 216440 & 217585 & 217270\tabularnewline
\hline 
33 & 263145 & 233050 & 233015 & 219450 & 221770 & 221290\tabularnewline
\hline 
34 & 235835 & 236090 & 235835 & 222450 & 222300 & 222210\tabularnewline
\hline 
35 & 238705 & 239020 & 238705 & 225330 & 225130 & 225120\tabularnewline
\hline 
36 & 256935 & 241870 & 241470 & 228140 & 229235 & 229020\tabularnewline
\hline 
\hline 
\multicolumn{7}{c}{$t=6,v=6$}\\
\hline 
\hline 
17 & 506713 & 486290 & 485616 & 449778 & 448530 & 447732\tabularnewline
\hline 
18 & 583823 & 505210 & 504546 & 468156 & 467232 & 466326\tabularnewline
\hline 
19 & 653756 & 522910 & 522258 & 485586 & 490488 & 488454\tabularnewline
\hline 
20 & 694048 & 539550 & 539280 & 501972 & 500880 & 500172\tabularnewline
\hline 
21 & 783784 & 555250 & 554082 & 517236 & 521730 & 519966\tabularnewline
\hline 
22 & 844834 & 570110 & 569706 & 531852 & 530832 & 530178\tabularnewline
\hline 
23 & 985702 & 584210 & 583716 & 545562 & 549660 & 548196\tabularnewline
\hline 
24 & 1035310 & 597630 & 597378 & 558888 & 557790 & 557280\tabularnewline
\hline 
25 & 1112436 & 610430 & 610026 & 571380 & 575010 & 573882\tabularnewline
\hline 
26 & 1146173 & 622670 & 622290 & 583320 & 582546 & 582030\tabularnewline
\hline 
27 & 1184697 & 624400 & 633294 & 594786 & 598620 & 597246\tabularnewline
\hline 
\hline
\end{tabular}
\par\end{centering}

\caption{Comparison of  $\TS\left<\rand,-;\rho,\C\right>$algorithms.}
\label{tab:c-1-6}
\end{table}

Table \ref{tab:c-2-6} presents results for cyclic group action based algorithms when the number of maximum uncovered interactions in the first stage is set to $2\rho$ and $3\rho$ respectively.

\begin{table}[htbp]
\begin{centering}
\begin{tabular}{c|c|c|c||c|c|c}
\hline 
$k$ & \multicolumn{1}{c}{} & \multicolumn{1}{c}{$2\rho$} &  & \multicolumn{1}{c}{} & \multicolumn{1}{c}{$3\rho$} & \tabularnewline
\cline{2-7} 
 & greedy & col & den & greedy & col & den\tabularnewline
\hline 
\hline 
\multicolumn{7}{c}{$t=6,v=3$}\\
\hline 
\hline 
53 & 11958 & 11955 & 11958 & 11700 & 11691 & 11694\tabularnewline
\hline 
54 & 12039 & 12027 & 12036 & 11790 & 11874 & 11868\tabularnewline
\hline 
55 & 12120 & 12183 & 12195 & 11862 & 12057 & 12027\tabularnewline
\hline 
56 & 12204 & 12342 & 12324 & 11949 & 11937 & 11943\tabularnewline
\hline 
57 & 12276 & 12474 & 12450 & 12027 & 12021 & 12024\tabularnewline
\hline 
\hline 
\multicolumn{7}{c}{$t=6,v=4$}\\
\hline 
\hline 
39 & 59412 & 59336 & 59304 & 58076 & 57976 & 57864\tabularnewline
\hline 
40 & 60040 & 59996 & 59964 & 58716 & 58616 & 58520\tabularnewline
\hline 
41 & 60700 & 61156 & 61032 & 59356 & 59252 & 59160\tabularnewline
\hline 
42 & 61320 & 62196 & 61976 & 59932 & 59840 & 59760\tabularnewline
\hline 
43 & 61908 & 63192 & 62852 & 60568 & 61124 & 60904\tabularnewline
\hline 
44 & 62512 & 64096 & 63672 & 61152 & 61048 & 60988\tabularnewline
\hline 
\hline 
\multicolumn{7}{c}{$t=6,v=5$}\\
\hline 
\hline 
31 & 204060 & 203650 & 203265 & 199180 & 198455 & 197870\tabularnewline
\hline 
32 & 207165 & 209110 & 208225 & 202255 & 204495 & 203250\tabularnewline
\hline 
33 & 207165 & 209865 & 209540 & 205380 & 204720 & 204080\tabularnewline
\hline 
34 & 213225 & 212830 & 212510 & 208225 & 207790 & 207025\tabularnewline
\hline 
35 & 216050 & 217795 & 217070 & 211080 & 213425 & 212040\tabularnewline
\hline 
36 & 218835 & 218480 & 218155 & 213770 & 213185 & 212695\tabularnewline
\hline 
\hline 
\multicolumn{7}{c}{$t=6,v=6$}\\
\hline 
\hline 
17 & 424842 & 422736 & 420252 & 411954 & 409158 & 405018\tabularnewline
\hline 
18 & 443118 & 440922 & 438762 & 430506 & 427638 & 423468\tabularnewline
\hline 
19 & 460014 & 457944 & 455994 & 447186 & 456468 & 449148\tabularnewline
\hline 
20 & 476328 & 474252 & 472158 & 463062 & 460164 & 456630\tabularnewline
\hline 
21 & 491514 & 489270 & 487500 & 478038 & 486180 & 479970\tabularnewline
\hline 
22 & 505884 & 503580 & 501852 & 492372 & 489336 & 486264\tabularnewline
\hline 
23 & 519498 & 517458 & 515718 & 505824 & 502806 & 500040\tabularnewline
\hline 
24 & 532368 & 530340 & 528828 & 518700 & 515754 & 512940\tabularnewline
\hline 
25 & 544842 & 542688 & 541332 & 530754 & 538056 & 532662\tabularnewline
\hline 
26 & 543684 & 543684 & 543684 & 542664 & 539922 & 537396\tabularnewline
\hline 
27 & 568050 & 566244 & 564756 & 553704 & 560820 & 555756\tabularnewline
\hline
\hline
\end{tabular}
\par\end{centering}

\caption{Comparison of  $\TS\left<\rand,-;2\rho,\C\right>$  and $\TS\left<\rand,-;3\rho,\C\right>$ algorithms.}
\label{tab:c-2-6}
\end{table}

For the Frobenius group action, we show results only for $v\in\{3,5\}$ in Table \ref{tab:f-1-6}.
The column  \textbf{bound} shows the upper bounds obtained from  Equation (\ref{eq:frobenius}).

\begin{table}[p]
\begin{centering}
\begin{tabular}{c|c|c|c|c|c|c}
\hline 
$k$ & tab & bound & na\"ive & greedy & col & den\tabularnewline
\hline 
\hline 
\multicolumn{7}{c}{$t=6,v=3$}\\
\hline 
\hline 
53 & 13021 & 13034 & 13029 & 12393 & 12387 & 12393\tabularnewline
\hline 
54 & 14155 & 13120 & 13071 & 12465 & 12513 & 12531\tabularnewline
\hline 
55 & 17161 & 13203 & 13179 & 12561 & 12549 & 12567\tabularnewline
\hline 
56 & 19033 & 13286 & 13245 & 12633 & 12627 & 12639\tabularnewline
\hline 
57 & 20185 & 13366 & 13365 & 12723 & 12717 & 12735\tabularnewline
\hline 
\hline 
\multicolumn{7}{c}{$t=6,v=5$}\\
\hline 
\hline 
31 & 233945 & 226570 & 226425 & 213025 & 212865 & 212865\tabularnewline
\hline 
32 & 258845 & 229820 & 229585 & 216225 & 216085 & 216065\tabularnewline
\hline 
33 & 281345 & 232950 & 232725 & 219285 & 219205 & 219145\tabularnewline
\hline 
34 & 293845 & 235980 & 234905 & 222265 & 223445 & 223265\tabularnewline
\hline 
35 & 306345 & 238920 & 238185 & 225205 & 227445 & 227065\tabularnewline
\hline 
36 & 356045 & 241760 & 241525 & 227925 & 231145 & 230645\tabularnewline
\hline 
\hline 
\end{tabular}
\par\end{centering}

\caption{Comparison of  $\TS\left<\rand,-;\rho,\F\right>$ algorithms.}
\label{tab:f-1-6}
\end{table}

Table \ref{tab:f-2-6} presents results for Frobenius group action  algorithms when the number of maximum uncovered interactions in the first stage is  $2\rho$ or $3\rho$.

\begin{table}[htbp]
\begin{centering}
\begin{tabular}{c|c|c|c||c|c|c}
\hline 
$k$ & \multicolumn{1}{c}{} & \multicolumn{1}{c}{$2\rho$} &  & \multicolumn{1}{c}{} & \multicolumn{1}{c}{$3\rho$} & \tabularnewline
\cline{2-7} 
 & greedy & col & den & greedy & col & den\tabularnewline
\hline 
\hline 
\multicolumn{7}{c}{$t=6,v=3$}\\
\hline 
\hline 
53 & 11931 & 11919 & 11931 & 11700 & 11691 & 11694\tabularnewline
\hline 
54 & 12021 & 12087 & 12087 & 11790 & 11874 & 11868\tabularnewline
\hline 
55 & 12105 & 12237 & 12231 & 11862 & 12057 & 12027\tabularnewline
\hline 
56 & 12171 & 12171 & 12183 & 11949 & 11937 & 11943\tabularnewline
\hline 
57 & 12255 & 12249 & 12255 & 12027 & 12021 & 12024\tabularnewline
\hline 
70 & 13167 & 13155 & 13179 & - & - & - \tabularnewline
\hline 
75 & 13473 & 13473 & 13479 & - & - & - \tabularnewline
\hline 
80 & 13773 & 13767 & 13779 & - & - & - \tabularnewline
\hline 
85 & 14031 & 14025 & 14037 & - & - & - \tabularnewline
\hline 
90 & 14289 & 14283 & 14301 & - & - & - \tabularnewline
\hline 
\hline 
\multicolumn{7}{c}{$t=6,v=5$}\\
\hline 
\hline 
31 & 203785 & 203485 & 203225 & 198945 & 198445 & 197825\tabularnewline
\hline 
32 & 206965 & 208965 & 208065 & 201845 & 204505 & 203105\tabularnewline
\hline 
33 & 209985 & 209645 & 209405 & 205045 & 209845 & 207865\tabularnewline
\hline 
34 & 213005 & 214825 & 214145 & 208065 & 207545 & 206985\tabularnewline
\hline 
35 & 215765 & 215545 & 215265 & 210705 & 210365 & 209885\tabularnewline
\hline 
36 & 218605 & 218285 & 218025 & 213525 & 213105 & 212645\tabularnewline
\hline 
50 & 250625 & 250365 & 250325 & - & - & - \tabularnewline
\hline 
55 & 259785 & 259625 & 259565 & - & - & - \tabularnewline
\hline 
60 & 268185 & 268025 & 267945 & - & - & - \tabularnewline
\hline 
65 & 275785 & 275665 & 275665 & - & - & - \tabularnewline
\hline 
\hline 
\end{tabular}
\par\end{centering}

\caption{Comparison of  $\TS\left<\rand,-;2\rho,\F\right>$  and $\TS\left<\rand,-;3\rho,\F\right>$ algorithms.}
\label{tab:f-2-6}
\end{table}

Next we present a handful of results when $t=5$. 
In the cases examined, using the trivial group action is too time consuming to be practical.
However, the cyclic or Frobenius cases are feasible.
Tables \ref{tab:f-2-5-5} and \ref{tab:c-2-5-6} compare  two stage algorithms when the number of uncovered interactions in the first stage is at most $2\rho$. 

\begin{table}[htbp]
\begin{centering}
\begin{tabular}{c|c|c|c|c}
\hline 
$k$ & tab & greedy & col & den\tabularnewline
\hline 
\hline 
67 & 59110 & 48325 & 48285 & 48305\tabularnewline
\hline 
68 & 60991 & 48565 & 48565 & 48585\tabularnewline
\hline 
69 & 60991 & 48765 & 49005 & 48985\tabularnewline
\hline 
70 & 60991 & 49005 & 48985 & 49025\tabularnewline
\hline 
71 & 60991 & 49245 & 49205 & 49245\tabularnewline
\hline 
\end{tabular}
\par\end{centering}

\caption{Comparison of  $\TS\left<\rand,-;2\rho,\F\right>$ algorithms. $t=5,v=5$}
\label{tab:f-2-5-5}
\end{table}

\begin{table}[htbp]
\begin{centering}
\begin{tabular}{c|c|c|c|c}
\hline 
$k$ & tab & greedy & col & den\tabularnewline
\hline 
\hline 
49 & 122718 & 108210 & 108072 & 107988\tabularnewline
\hline 
50 & 125520 & 109014 & 108894 & 108822\tabularnewline
\hline 
51 & 128637 & 109734 & 110394 & 110166\tabularnewline
\hline 
52 & 135745 & 110556 & 110436 & 110364\tabularnewline
\hline 
53 & 137713 & 111306 & 111180 & 111120\tabularnewline
\hline 
\end{tabular}
\par\end{centering}

\caption{Comparison of  $\TS\left<\rand,-;2\rho,\C\right>$ algorithms. $t=5,v=6$}
\label{tab:c-2-5-6}
\end{table}

In almost all cases there is no clear winner among the three  second stage methods.
Methods \den\  and \greedy\  are, however, substantially faster and use less memory than method \color; for practical purposes they would be preferred.

All  code used in this experimentation is available from the github repository
\begin{center}  https://github.com/ksarkar/CoveringArray \end{center} under an open source GPLv3 license.

%\section{Limited dependence, Lovász Local Lemma and Moser-Tardos algorithm}\label{sec:mt-algo}
\section{Limited dependence and the Moser-Tardos algorithm}\label{sec:mt-algo}
Here we explore a different randomized algorithm that produces smaller covering arrays than Algorithm \ref{algo:rand-slj}. 
When $k>2t$, there are interactions that share no column. 
The events of coverage of such interactions are independent. 
Moser et al. \cite{moser09,moser10} provide an efficient randomized construction method that exploits this limited dependence. 
Specializing their method to covering arrays, we obtain Algorithm \ref{algo:m-t}.
For the specified value of $N$ in the algorithm it is guaranteed that the expected number of times the loop in line \ref{line:mt-loop} of Algorithm \ref{algo:m-t} is repeated is linearly bounded in $k$ (See Theorem 1.2 of \cite{moser10}).
% The expected total number of resampling is at most $(k-t+1)/t^2$.

\begin{algorithm}[htbp]
\SetKw{Break}{break}
\KwIn{$t$ : strength of the  covering array, $k$ : number of factors, $v$ : number of levels for each factor}
\KwOut{$A$ : a $\CA(N;t,k,v)$}
Let $N := \frac{\log\left\{{k \choose t} - {k-t \choose t}\right\}+t.\log v+1}{\log\left(\frac{v^{t}}{v^{t}-1}\right)}$\; \label{line:gss}
Construct an $N \times k$ array $A$ where each entry is chosen independently and uniformly at random from a $v$-ary alphabet\; 
\Repeat {covered $=$ true}{ \label{line:mt-loop}
    Set \emph{covered}$:=$ true\;
    \For {each interaction $\iota \in \mathcal{I}_{t,k,v}$}{ \label{line:online-check}
        \If {$\iota$ is not covered in $A$} {
            Set \emph{covered}$:=$ false\;
            Set \emph{missing-interaction} $:= \iota$\;
            \Break\;
        }
    }
    \If {covered $=$ false}{
        Choose all the entries in the $t$ columns involved in \emph{missing-interaction} independently and uniformly at random from the $v$-ary alphabet\;
    }
}
Output $A$\;

\caption{Moser-Tardos type algorithm for covering array construction.}
\label{algo:m-t}
\end{algorithm}

The  upper bound on $\CAN(t,k,v)$  guaranteed by Algorithm \ref{algo:m-t} is obtained by applying the Lov\'asz local lemma (LLL).

\begin{lem}\label{lem:lllsym}
(Lov\'asz local lemma; symmetric case) (see {\rm \cite{alon08}}) 
Let $A_{1},A_{2},\ldots,A_{n}$ events in an arbitrary probability space. 
Suppose that each event $A_{i}$ is mutually independent of a set of all other events $A_{j}$ except for at most $d$, and that $\Pr[A_{i}]\le p$ for all $1\le i\le n$.
If $ep(d+1)\le1$, then $\Pr[\cap_{i=1}^{n}\bar{A_{i}}]>0$.
\end{lem}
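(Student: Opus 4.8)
The plan is to prove the more general asymmetric (Spencer/Lovász) form first and then specialize, since the symmetric statement follows immediately once we have real numbers $x_i \in [0,1)$ attached to each event. Concretely, for each $i$ let $\mathcal{D}(i)$ denote the set of indices $j \neq i$ such that $A_i$ is \emph{not} mutually independent of $\{A_j : j \in \mathcal{D}(i)\}$; by hypothesis $|\mathcal{D}(i)| \le d$. I would aim to show, by induction on the size of an index set $S$, that for every event $A_i$ and every $S \subseteq [n]$ with $i \notin S$,
\[
\Pr\!\left[A_i \,\Big|\, \bigcap_{j \in S} \bar{A_j}\right] \le x_i,
\]
where in the symmetric case I set every $x_i = \frac{1}{d+1}$. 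The key algebraic fact driving the choice is that $p \le x_i \prod_{j \in \mathcal{D}(i)}(1 - x_j)$ must hold; with $x_i = \frac{1}{d+1}$ this reduces to $p \le \frac{1}{d+1}\left(1 - \frac{1}{d+1}\right)^{d}$, and since $\left(1 - \frac{1}{d+1}\right)^{d} > e^{-1}$ for all $d \ge 1$, the assumed inequality $ep(d+1) \le 1$ is exactly what is needed to guarantee this.

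First I would handle the base case $S = \emptyset$, where the conditional probability is just $\Pr[A_i] \le p \le x_i$, trivially satisfied. For the inductive step with $|S| = s$, I would partition $S$ into $S_1 = S \cap \mathcal{D}(i)$ (the neighbours of $i$) and $S_2 = S \setminus S_1$, then write the conditional probability using the identity
\[
\Pr\!\left[A_i \,\Big|\, \bigcap_{j\in S}\bar{A_j}\right]
= \frac{\Pr\!\left[A_i \cap \bigcap_{j\in S_1}\bar{A_j} \,\big|\, \bigcap_{\ell\in S_2}\bar{A_\ell}\right]}{\Pr\!\left[\bigcap_{j\in S_1}\bar{A_j} \,\big|\, \bigcap_{\ell\in S_2}\bar{A_\ell}\right]}.
\]
For the numerator I would bound it above by $\Pr[A_i \mid \bigcap_{\ell \in S_2}\bar{A_\ell}] = \Pr[A_i] \le p$, using that $A_i$ is independent of the events indexed by $S_2$ (none of which are neighbours). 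For the denominator I would expand $\Pr[\bigcap_{j\in S_1}\bar{A_j} \mid \cdots]$ as a telescoping product and apply the induction hypothesis to each factor, since each conditioning set involved has size strictly less than $s$; this yields a lower bound of $\prod_{j\in S_1}(1 - x_j) \ge \prod_{j \in \mathcal{D}(i)}(1-x_j)$. Combining gives the ratio bounded by $\frac{p}{\prod_{j\in\mathcal{D}(i)}(1-x_j)} \le x_i$, closing the induction.

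Finally, with the conditional bound in hand, I would conclude by writing
\[
\Pr\!\left[\bigcap_{i=1}^{n}\bar{A_i}\right]
= \prod_{i=1}^{n}\Pr\!\left[\bar{A_i} \,\Big|\, \bigcap_{j<i}\bar{A_j}\right]
\ge \prod_{i=1}^{n}\left(1 - x_i\right) > 0,
\]
which is strictly positive because each $x_i = \frac{1}{d+1} < 1$. The main obstacle I anticipate is the inductive step's careful tracking of which conditioning sets shrink: one must verify that every conditional probability invoked in expanding the denominator genuinely conditions on a set smaller than $S$, so that the induction hypothesis legitimately applies. Managing this bookkeeping cleanly — rather than the final probability estimate, which is routine — is where the real care lies.
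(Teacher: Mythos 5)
Your proposal is correct and is essentially the same argument as in the paper's source: the paper states this lemma without proof, citing Alon and Spencer \cite{alon08}, and their proof is precisely your route --- induction on the size of the conditioning set with the split into neighbours $S_1$ and non-neighbours $S_2$, followed by specializing the general lemma to $x_i = \frac{1}{d+1}$ via $\left(1-\frac{1}{d+1}\right)^{d} > e^{-1}$. The only routine points you elide are the degenerate case $d=0$ (where $x_i = 1$ and one argues directly from full independence) and the fact that every conditioning event has positive probability, which the same induction supplies.
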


The symmetric version of Lov\'asz local lemma provides an upper bound on the probability of a ``bad'' event in terms of the maximum degree of a bad event in a dependence graph, so that the probability that all the bad events are avoided is non zero.
Godbole, Skipper, and Sunley \cite{GSS} apply Lemma \ref{lem:lllsym} essentially to obtain  the  bound on $\CAN(t,k,v)$  in line \ref{line:gss} of Algorithm \ref{algo:m-t}.

\begin{thm}\label{thm:godbole}{\rm \cite{GSS}} Let $t,\, v$ and $k \ge 2t$ be integers with $t,v\ge2$. Then
\[
CAN(t,k,v) \le \frac{\log\left\{{k \choose t} - {k-t \choose t}\right\}+t \log v+1}{\log\left(\frac{v^{t}}{v^{t}-1}\right)}
\]
\end{thm}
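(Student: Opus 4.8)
The plan is to obtain the bound as a direct application of the symmetric Lov\'asz local lemma (Lemma \ref{lem:lllsym}) to a uniformly random array. First I would fix $N$ as in line \ref{line:gss} of Algorithm \ref{algo:m-t} and consider the $N\times k$ array in which every entry is chosen independently and uniformly from the $v$-ary alphabet. For each interaction $\iota\in\mathcal{I}_{t,k,v}$, define the ``bad'' event $A_\iota$ that $\iota$ is \emph{not} covered by the array. A $\CA(N;t,k,v)$ exists precisely when $\Pr\!\left[\bigcap_\iota \bar{A_\iota}\right]>0$, so it suffices to verify the hypothesis $ep(d+1)\le 1$ of Lemma \ref{lem:lllsym}, where $p$ bounds $\Pr[A_\iota]$ and $d$ bounds the dependence degree.

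Next I would compute the two parameters $p$ and $d$. A single row covers a fixed interaction on its $t$ designated columns with probability $v^{-t}$, and the $N$ rows are independent, so $\Pr[A_\iota]=\left(1-\tfrac{1}{v^t}\right)^N=\left(\tfrac{v^t-1}{v^t}\right)^N$ uniformly in $\iota$; this is the value of $p$. For the dependence degree, the crucial observation is that $A_\iota$ is a function only of the array entries lying in the $t$ columns of $\iota$. Hence if $\iota'$ shares no column with $\iota$, then $A_{\iota'}$ depends on a disjoint block of (independent) entries, and consequently $A_\iota$ is mutually independent of the entire family of such events, not merely pairwise independent of each. It therefore suffices to count the interactions that share at least one column with $\iota$: the interactions avoiding all $t$ of $\iota$'s columns number $\binom{k-t}{t}v^t$, so those meeting $\iota$ in a column, including $\iota$ itself, number $\left[\binom{k}{t}-\binom{k-t}{t}\right]v^t$, giving $d+1=\left[\binom{k}{t}-\binom{k-t}{t}\right]v^t$.

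Finally I would substitute into $ep(d+1)\le 1$, namely $e\left(\tfrac{v^t-1}{v^t}\right)^N\left[\binom{k}{t}-\binom{k-t}{t}\right]v^t\le 1$, take natural logarithms, and solve for $N$. Using $-\log\!\left(\tfrac{v^t-1}{v^t}\right)=\log\!\left(\tfrac{v^t}{v^t-1}\right)$ together with $\log e=1$, this rearranges exactly to the value of $N$ prescribed in line \ref{line:gss}; for that $N$ the hypothesis of Lemma \ref{lem:lllsym} holds, so $\Pr\!\left[\bigcap_\iota \bar{A_\iota}\right]>0$ and a $\CA(N;t,k,v)$ exists.

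The only genuinely delicate points are the mutual-independence justification, which lets the dependence set be taken to be exactly the column-meeting interactions as the symmetric form of the lemma demands, and the bookkeeping that turns the inequality into the stated closed form; both become routine once the entrywise independence of the array is exploited, so I do not expect a serious obstacle. I would also note that the hypothesis $k\ge 2t$ enters only to make $\binom{k-t}{t}$ a positive count, so that the dependence degree $\left[\binom{k}{t}-\binom{k-t}{t}\right]v^t$ is a genuine saving over the trivial $\binom{k}{t}v^t$ and the bound improves on the Stein--Lov\'asz--Johnson estimate.
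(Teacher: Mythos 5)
Your proposal is correct and is exactly the argument the paper intends: it cites Godbole, Skipper, and Sunley and notes that the bound follows by applying the symmetric Lov\'asz local lemma (Lemma \ref{lem:lllsym}) to the uncoverage events of a uniformly random $N\times k$ array, with $p=\left(1-\tfrac{1}{v^t}\right)^N$ and $d+1=\left[\binom{k}{t}-\binom{k-t}{t}\right]v^t$ counted via interactions sharing a column. Your mutual-independence justification and the algebra recovering the value of $N$ in line \ref{line:gss} fill in precisely the details the paper leaves to the citation.
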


The bound on the size of covering arrays obtained from Theorem \ref{thm:godbole} is asymptotically tighter than the one obtained from Theorem \ref{thm:slj}.
Figure \ref{fig:comp-slj-gss} compares the bounds for $t=6$ and $v=3$.  

The original proof of LLL is essentially non-constructive and does not immediately lead to a polynomial time construction algorithm for covering arrays satisfying the bound of Theorem \ref{thm:godbole}. 
Indeed no previous construction algorithms appear to be based on it.
However the Moser-Tardos method of Algorithm \ref{algo:m-t} does provide a construction algorithm running in expected polynomial time.
For sufficiently large values of $k$ Algorithm \ref{algo:m-t} produces smaller covering arrays than the Algorithm \ref{algo:rand-slj}.

\begin{figure}[htb]
\begin{centering}
\includegraphics[bb=100bp 238bp 520bp 553bp,clip,scale=0.8]{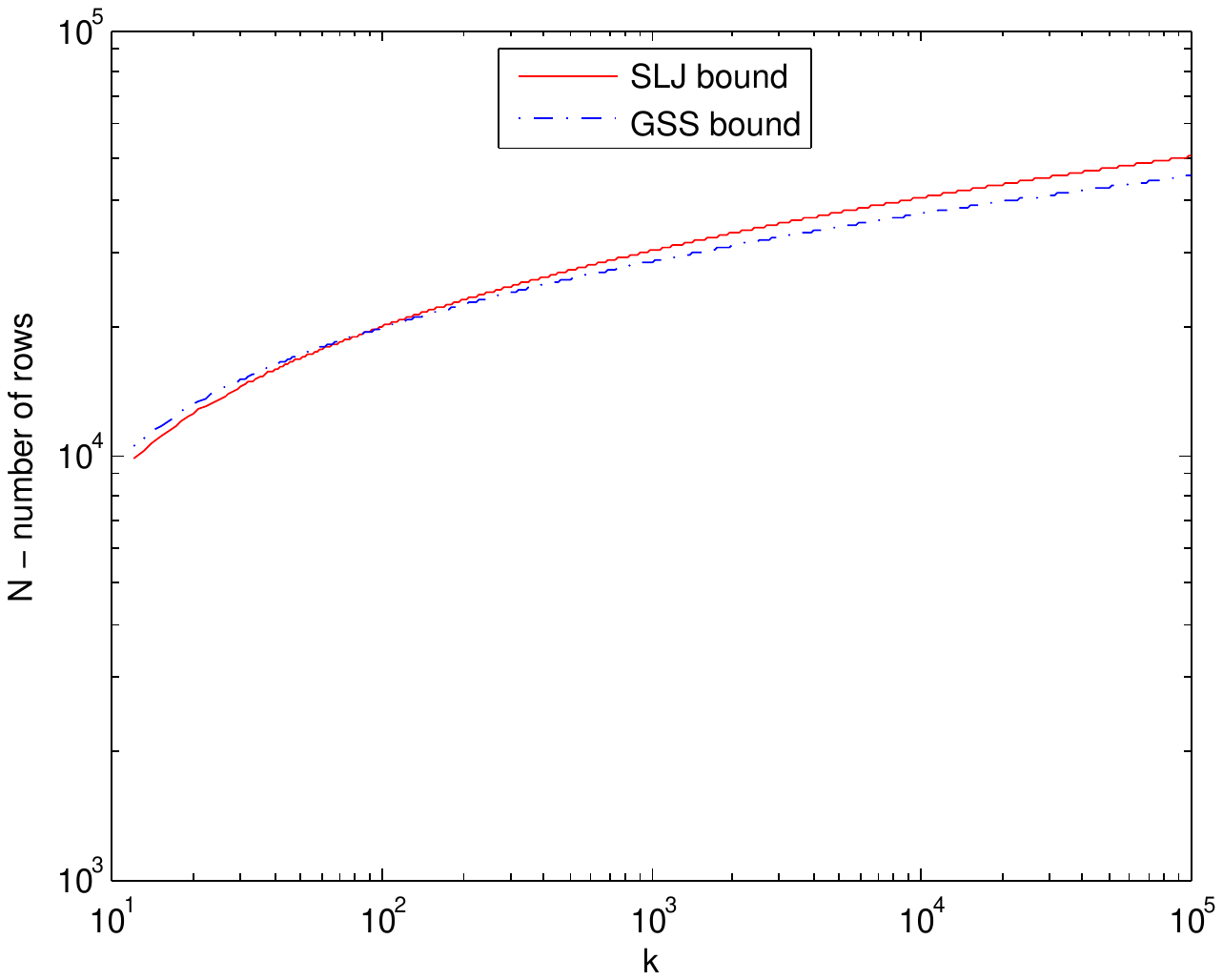}
\par\end{centering}

\caption{Comparison of SLJ (Theorem \ref{thm:slj}) and GSS (Theorem \ref{thm:godbole}) bounds  for $t=6$ and $v=3$. The graph is plotted in log-log scale to highlight the asymptotic difference between the two bounds.}
\label{fig:comp-slj-gss}
\end{figure}

But the question remains:  Does Algorithm \ref{algo:m-t}  produce smaller covering arrays than the currently best known results within the range that it can be effectively computed?
Perhaps surprisingly, we show that the answer is affirmative.
In Algorithm \ref{algo:m-t} we do not need to store the coverage information of  individual interactions in memory because each time an uncovered interaction is encountered  we re-sample the columns involved in that interaction and start the check afresh (checking the coverage in interactions in the same order each time). 
Consequently, Algorithm \ref{algo:m-t} can be applied for larger values of $k$ than the density algorithm.

Smaller covering arrays can be obtained by exploiting a group action using LLL, as shown in \cite{sarkar16}.
Table \ref{tab:lll} shows the sizes of the covering arrays constructed by a variant of Algorithm \ref{algo:m-t} that employs cyclic and Frobenius group actions.
While this single stage algorithm produces smaller arrays than the currently best known  \cite{catables}, these are already superseded by the two-stage based algorithms.

%We can push the method little bit more to obtain covering arrays with few less rows than guaranteed in the bounds shown above. 

\begin{table}[htbp]
\begin{centering}
\begin{subtable}[h]{0.3\textwidth}
\begin{centering}
\begin{tabular}{c|c|c}
\hline 
k & tab & MT\tabularnewline
\hline 
\hline 
56 & 19033 & 16281\tabularnewline
\hline 
57 & 20185 & 16353\tabularnewline
\hline 
58 & 23299 & 16425\tabularnewline
\hline 
59 & 23563 & 16491\tabularnewline
\hline 
60 & 23563 & 16557\tabularnewline
\hline 
\end{tabular}
\caption{\F. $t=6,\, v=3$}
\par\end{centering}
\end{subtable}
\quad
\begin{subtable}[h]{0.3\textwidth}
\begin{centering}
\begin{tabular}{c|c|c}
\hline 
k & tab & MT\tabularnewline
\hline 
\hline 
44 & 411373 & 358125\tabularnewline
\hline 
45 & 417581 & 360125\tabularnewline
\hline 
46 & 417581 & 362065\tabularnewline
\hline 
47 & 423523 & 363965\tabularnewline
\hline 
48 & 423523 & 365805\tabularnewline
\hline 
\end{tabular}
\caption{\F. $t=6,\, v=5$}
\par\end{centering}
\end{subtable}
\quad
\begin{subtable}[h]{0.3\textwidth}
\begin{centering}
\begin{tabular}{c|c|c}
\hline 
k & tab & MT\tabularnewline
\hline 
\hline 
25 & 1006326 & 1020630\tabularnewline
\hline 
26 & 1040063 & 1032030\tabularnewline
\hline 
27 & 1082766 & 1042902\tabularnewline
\hline 
28 & 1105985 & 1053306\tabularnewline
\hline 
29 & 1149037 & 1063272\tabularnewline
\hline 
\end{tabular}
\caption{\C. $t=6,\, v=6$}
\par\end{centering}
\end{subtable}
\par\end{centering}

\caption{Comparison of covering array size  from Algorithm \ref{algo:m-t} (MT) with
the best known results  \cite{catables} (tab).}
\label{tab:lll}
\end{table}

\subsection{Moser-Tardos type algorithm for the first stage}\label{subsec:lll-first-stage}
The linearity of expectation arguments used in the SLJ bounds permit one to consider situations in which a few of the ``bad'' events are allowed to occur, a fact that we exploited in the first stage of the algorithms thus far.
However, the Lov\'asz local lemma does not address this situation directly.  
The conditional Lov\'asz local lemma (LLL) distribution, introduced in \cite{haeupler10}, is a very useful tool.

\begin{lem}\label{lem:condlll}
(Conditional LLL distribution; symmetric case) (see \cite{alon08,sarkar16}) 
Let $\mathscr{A} = \{A_{1},A_{2},\ldots,A_{l}\}$ be a set of $l$ events in an arbitrary probability space. 
Suppose that each event $A_{i}$ is mutually independent of a set of all other events $A_{j}$ except for at most $d$, and that $\Pr[A_{i}]\le p$ for all $1\le i\le l$.
Also suppose that $ep(d+1)\le 1$ (Therefore, by LLL (Lemma \ref{lem:lllsym}) $\Pr[\cap_{i=1}^{l}\bar{A_{i}}]>0$).
Let $B\notin \mathscr{A}$ be another event in the same probability space with $\Pr[B] \le q$, such that $B$ is also mutually independent of a set of all other events $A_{j} \in \mathscr{A}$ except for at most $d$.
Then $\Pr[B|\cap_{i=1}^{l}\bar{A_{i}}] \le eq$.
\end{lem}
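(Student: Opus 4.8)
The plan is to obtain the bound from the inductive core of the Lov\'asz local lemma together with a short conditioning argument that isolates the events on which $B$ depends. Recall that the standard (asymmetric) proof of the LLL in \cite{alon08} shows, under a hypothesis $\Pr[A_i]\le x_i\prod_{j\sim i}(1-x_j)$ for suitable reals $x_i\in[0,1)$, the \emph{inductive claim}: for every index $i$ and every set $S$ of indices with $i\notin S$, $\Pr[A_i\mid\bigcap_{j\in S}\bar{A_j}]\le x_i$. I would take this claim as the engine of the proof; its verification (induction on $|S|$, splitting the conditioning set into neighbours and non-neighbours of $A_i$) is the technical heart of the LLL and may simply be quoted.

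First I would instantiate the symmetric regime by setting $x_i=\frac{1}{d+1}$ for all $i$. Using the elementary inequality $\left(1+\frac1d\right)^{d}<e$, equivalently $\left(\frac{d}{d+1}\right)^{d}>\frac1e$, valid for every positive integer $d$, the hypothesis $ep(d+1)\le1$ yields $p\le\frac{1}{e(d+1)}<\frac{1}{d+1}\left(\frac{d}{d+1}\right)^{d}$. Since each $A_i$ has at most $d$ dependent neighbours, $\prod_{j\sim i}(1-x_j)\ge\left(\frac{d}{d+1}\right)^{d}$, so $\Pr[A_i]\le x_i\prod_{j\sim i}(1-x_j)$ and the inductive claim applies, giving $\Pr[A_i\mid\bigcap_{j\in S}\bar{A_j}]\le\frac{1}{d+1}$.

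Next I would bring in $B$. Let $\Gamma(B)\subseteq\{1,\dots,l\}$ index the events of $\mathscr{A}$ on which $B$ depends, with $s:=|\Gamma(B)|\le d$; reindex so that $\Gamma(B)=\{1,\dots,s\}$ and $B$ is mutually independent of $A_{s+1},\dots,A_l$. I would then write the target probability, after dividing through by $\Pr[\bigcap_{i=s+1}^{l}\bar{A_i}]$ (which is positive by the LLL applied to that subfamily), as
\[
\Pr\!\left[B\,\Big|\,\bigcap_{i=1}^{l}\bar{A_i}\right]=\frac{\Pr\!\left[B\cap\bigcap_{i=1}^{s}\bar{A_i}\,\Big|\,\bigcap_{i=s+1}^{l}\bar{A_i}\right]}{\Pr\!\left[\bigcap_{i=1}^{s}\bar{A_i}\,\Big|\,\bigcap_{i=s+1}^{l}\bar{A_i}\right]}.
\]
For the numerator, dropping the $\bar{A_i}$ factors and using that $B$ is independent of $A_{s+1},\dots,A_l$ gives the upper bound $\Pr[B]\le q$. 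For the denominator, the chain rule produces $s$ factors $\Pr[\bar{A_i}\mid\bar{A_1}\cap\cdots\cap\bar{A_{i-1}}\cap\bigcap_{j>s}\bar{A_j}]\ge 1-\frac{1}{d+1}$, each bounded via the inductive claim, so the denominator is at least $\left(\frac{d}{d+1}\right)^{s}\ge\left(\frac{d}{d+1}\right)^{d}$. Combining the two estimates gives $\Pr[B\mid\bigcap_{i=1}^{l}\bar{A_i}]\le q\left(\frac{d+1}{d}\right)^{d}=q\left(1+\frac1d\right)^{d}<eq$, as required.

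I expect the main obstacle to be the inductive claim $\Pr[A_i\mid\bigcap_{j\in S}\bar{A_j}]\le x_i$; everything after it is bookkeeping. As this is precisely the induction underlying the LLL, I would cite it from \cite{alon08} rather than reprove it, so that the only genuinely new ingredients are the conditioning identity displayed above and the numerical estimate $\left(1+\frac1d\right)^{d}<e$. Finally I would dispose of the degenerate case $s=0$ (in particular $d=0$) separately: there $B$ is independent of all the $A_i$, whence $\Pr[B\mid\bigcap_{i}\bar{A_i}]=\Pr[B]\le q\le eq$ immediately.
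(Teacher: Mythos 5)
Your proof is correct: the paper itself does not prove this lemma but cites \cite{alon08,sarkar16}, and your argument---quoting the LLL inductive claim $\Pr[A_i\mid\bigcap_{j\in S}\bar{A_j}]\le x_i$ with $x_i=\frac{1}{d+1}$, then isolating the at most $d$ neighbours of $B$ via the conditioning identity and the chain rule to pick up the factor $\left(\frac{d+1}{d}\right)^{d}<e$---is precisely the standard argument given in those references. Your separate treatment of the degenerate case $s=0$ (in particular $d=0$), where the expression $\left(1+\frac{1}{d}\right)^{d}$ is undefined, is also the right bookkeeping.
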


We  apply the conditional LLL distribution to obtain an upper bound on the size of partial array that leaves at most $\log\left(\frac{v^{t}}{v^{t}-1}\right) \approx v^t$ interactions uncovered.
For a positive integer $k$,  let $I = \{j_1, \ldots, j_\rho\} \subseteq [k]$ where $j_1<\ldots<j_\rho$. 
Let $A$ be an $n \times k$ array where each entry is from the set $[v]$. 
Let $A_I$ denote the $n \times \rho$ array in which $A_I(i,\ell) = A(i,j_\ell)$ for $1 \le i \le N$ and $1 \le \ell \le \rho$; $A_I$ is the projection of $A$ onto the columns in $I$.

Let $M\subseteq [v]^t$ be a set of $m$ $t$-tuples of symbols, and $C\in \binom{[k]}{t}$ be a set of $t$ columns.
Suppose the entries in the array $A$ are chosen independently from $[v]$ with uniform probability.
Let $B_C$ denote the event that at least one of the tuples in $M$ is not covered in $A_C$.
There are $\eta = \binom{k}{t}$ such events, and for all of them $\Pr[B_C]\le m\left(1-\frac{1}{v^t}\right)^n$.
Moreover, when $k\ge 2t$, each of the events is mutually independent of all other events except for at most $\rho = \binom{k}{t} - \binom{k-t}{t} - 1 < t\binom{k}{t-1}$.
Therefore, by the Lov\'asz local lemma, when $e\rho m \left(1-\frac{1}{v^t}\right)^n \le 1$, none of the events $B_C$ occur.
Solving for $n$, when
\begin{equation}\label{eq:n1}
n \ge \frac{\log (e\rho m)}{\log\left(\frac{v^{t}}{v^{t}-1}\right)}
\end{equation}
there exists an $n \times k$ array $A$ over $[v]$ such that for all $C\in \binom{[k]}{t}$, $A_C$ covers all the $m$ tuples in $M$.
In fact we can use a Moser-Tardos type algorithm to construct such an array.

Let $\iota$ be an interaction whose $t$-tuple of symbols is not in $M$.
Then the probability that $\iota$ is not covered in an $n \times k$ array is at most $\left(1-\frac{1}{v^t}\right)^n$ when each entry of the array is chosen independently from $[v]$ with uniform probability.
Therefore, by the conditional LLL distribution the probability that $\iota$ is not covered in the array $A$ where for all $C\in \binom{[k]}{t}$, $A_C$ covers all the $m$ tuples in $M$ is at most $e\left(1-\frac{1}{v^t}\right)^n$.
Moreover, there are $\eta (v^t - m)$ such interactions $\iota$.
By the linearity of expectation, the expected number of uncovered interactions in $A$ is less than $v^t$ when $\eta (v^t - m) e\left(1-\frac{1}{v^t}\right)^n \le v^t$.
Solving for $n$, we obtain
\begin{equation}\label{eq:n2}
n \ge \frac{\log \left\{\eta e \left(1 - \frac{m}{v^t}\right)\right\}}{\log\left(\frac{v^{t}}{v^{t}-1}\right)}.
\end{equation}

Therefore, there exists an $n \times k$ array with $n = \max\left\{\frac{\log (e\rho m)}{\log\left(\frac{v^{t}}{v^{t}-1}\right)}, \frac{\log \left\{\eta e \left(1 - \frac{m}{v^t}\right)\right\}}{\log\left(\frac{v^{t}}{v^{t}-1}\right)} \right\}$ that has at most $v^t$ uncovered interactions.
To compute $n$ explicitly, we must choose $m$. 
We can select a value of $m$ to minimize $n$  graphically for given values of $t,\,k$ and $v$.
For example, Figure \ref{fig:mt-fs} plots Equations \ref{eq:n1} and \ref{eq:n2} against $m$ for $t=3,\,k=350,\,v=3$, and finds the minimum value of $n$.

\begin{figure}[htb]
    \centering
    \begin{subfigure}[b]{0.4\textwidth}
        \includegraphics[bb=100bp 238bp 520bp 553bp,clip,scale=0.5]{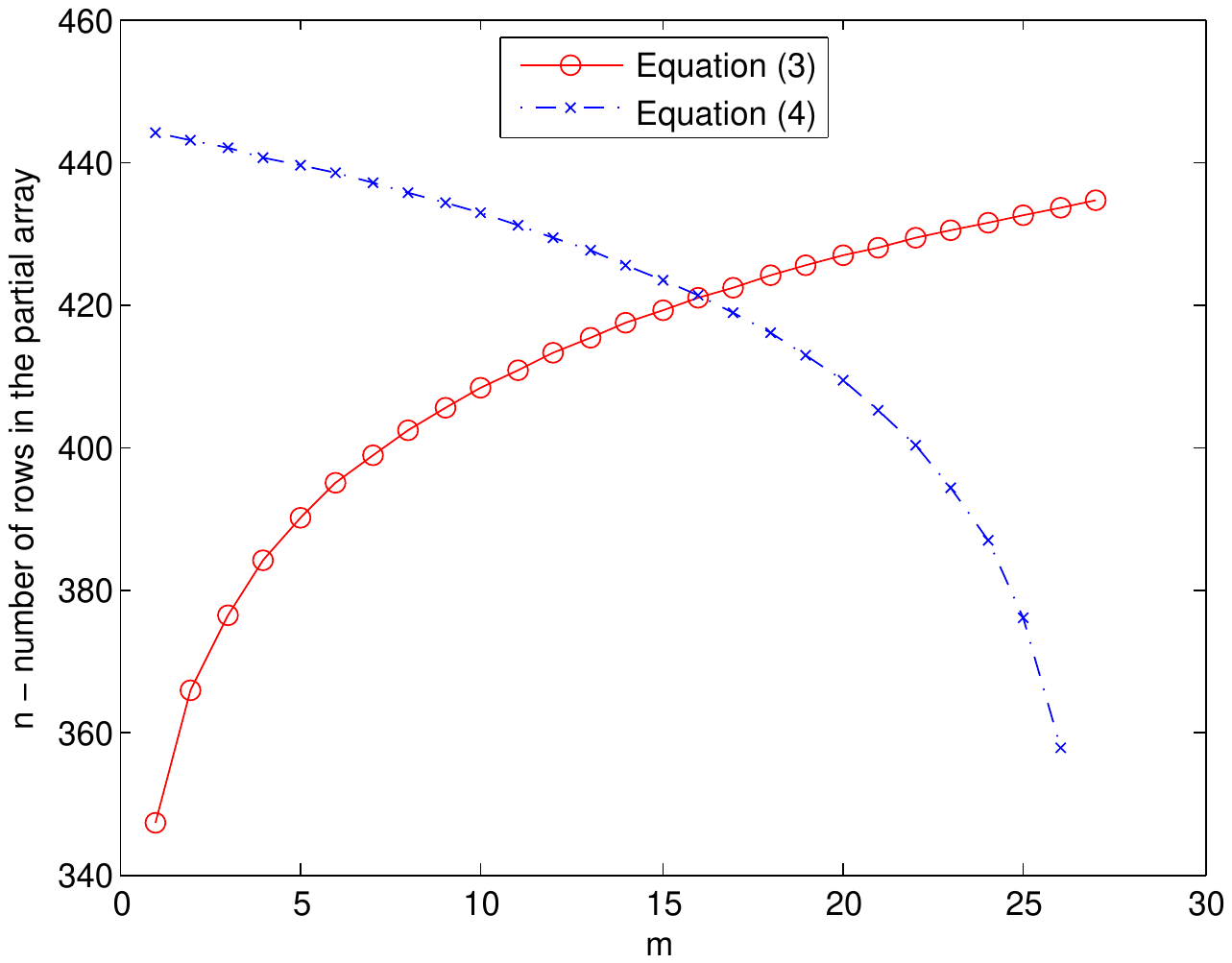}
        \caption{Equations \ref{eq:n1} and \ref{eq:n2} against $m$.}
    \end{subfigure}
    \quad 
    \begin{subfigure}[b]{0.4\textwidth}
        \includegraphics[bb=100bp 238bp 520bp 553bp,clip,scale=0.5]{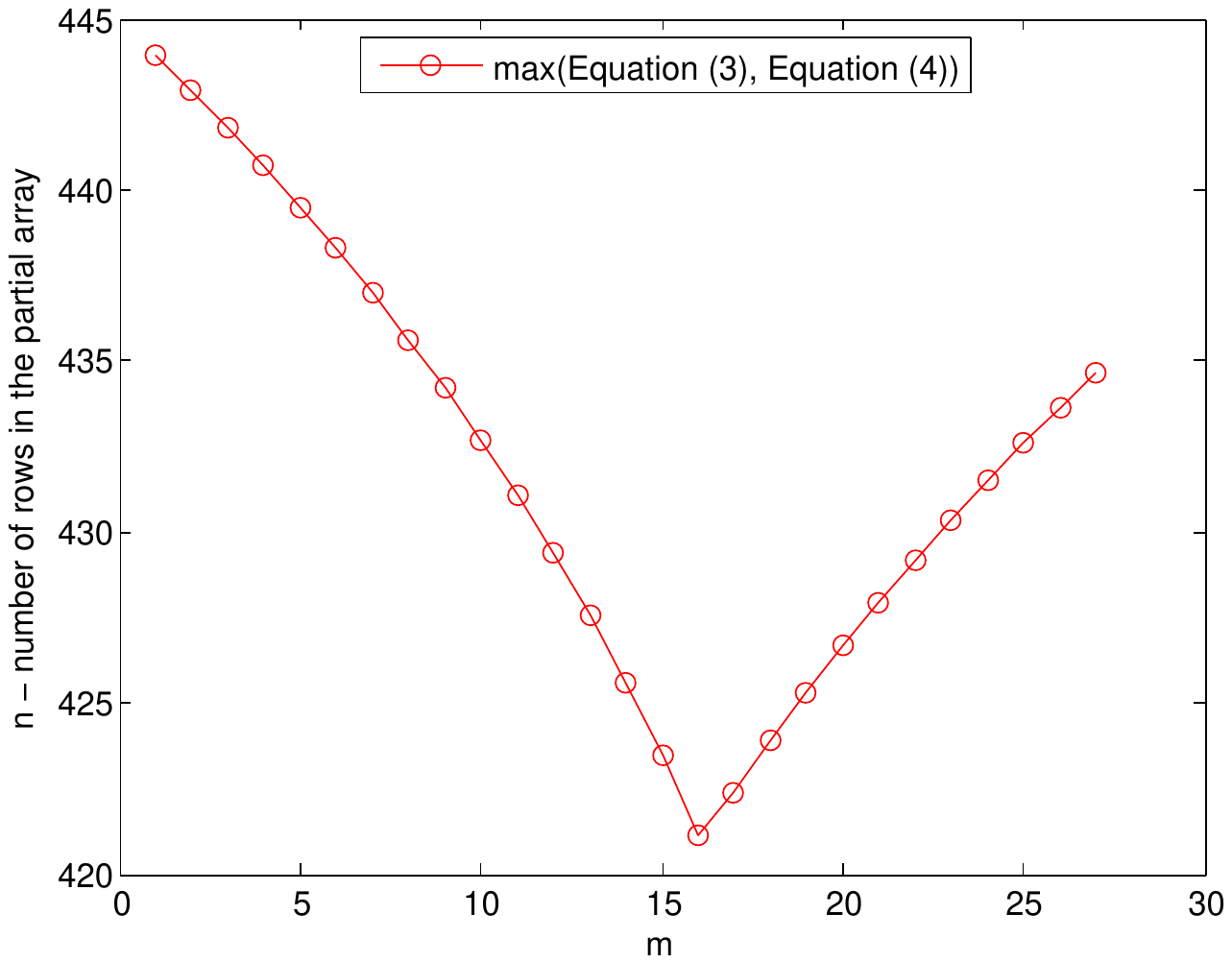}
        \caption{Maximum of the two sizes against $m$.}
    \end{subfigure}
    \caption{The minimum  is at $n=422$,  when $m=16$.
    $t=3,\,k=350,\,v=3$}\label{fig:eq-comp}
 	\label{fig:mt-fs}
\end{figure}

We compare the size of the partial array from the na\"ive two-stage method (Algorithm \ref{algo:two-stage-simple}) with the size obtained by the graphical methods in Figure \ref{fig:mt-rand-comp}.
The Lov\'asz local lemma based method is asymptotically better than the simple randomized method. 
However, except for the small values of $t$ and $v$, in the range of $k$ values  relevant for practical applications the simple randomized algorithm requires fewer rows than the Lov\'asz local lemma based method.

\begin{figure}[htb]
    \centering
    \begin{subfigure}[b]{0.4\textwidth}
        \includegraphics[bb=100bp 238bp 520bp 553bp,clip,scale=0.5]{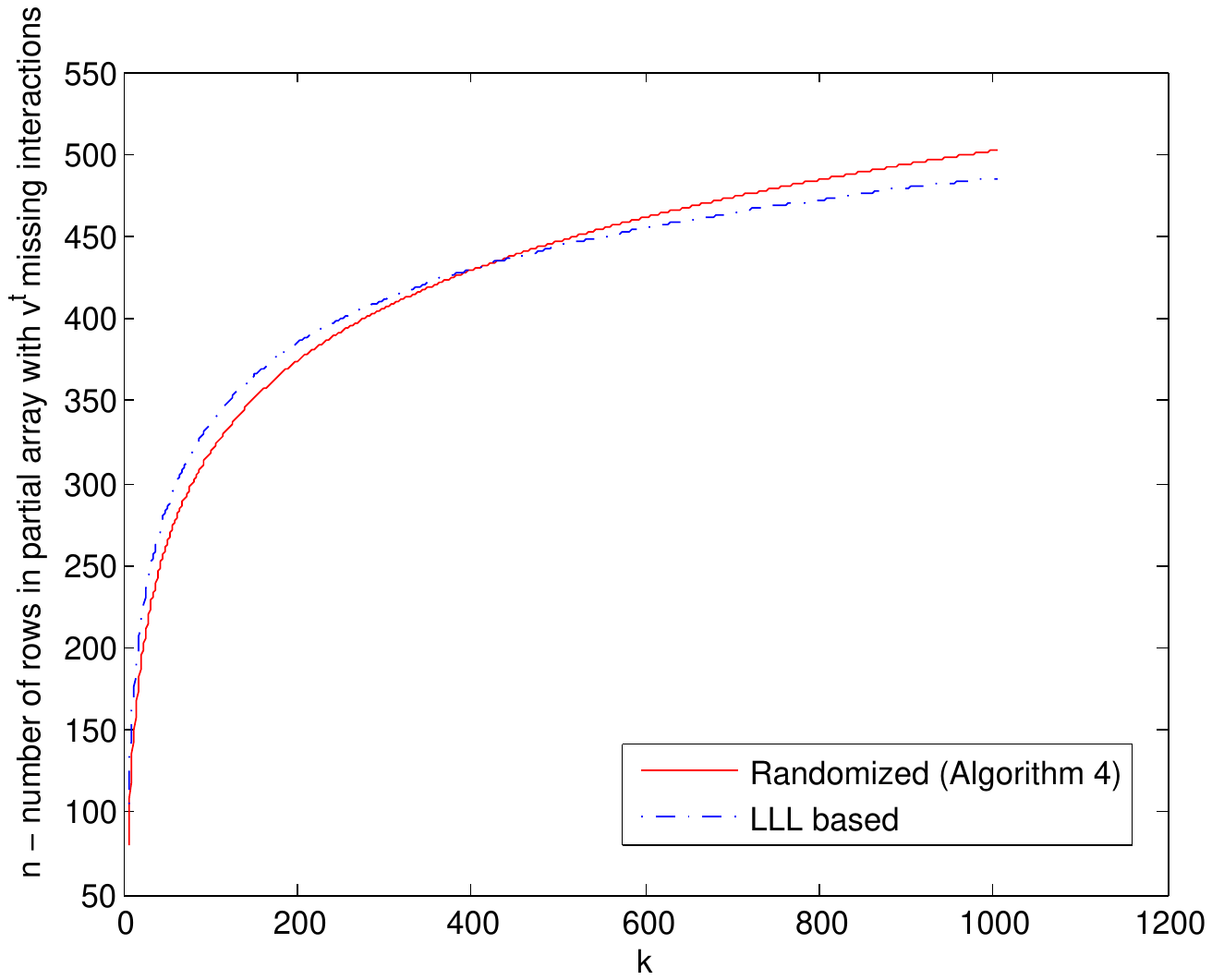}
        \caption{$t=3,\,v=3$.}
    \end{subfigure}
    \quad 
    \begin{subfigure}[b]{0.4\textwidth}
        \includegraphics[bb=100bp 238bp 520bp 553bp,clip,scale=0.5]{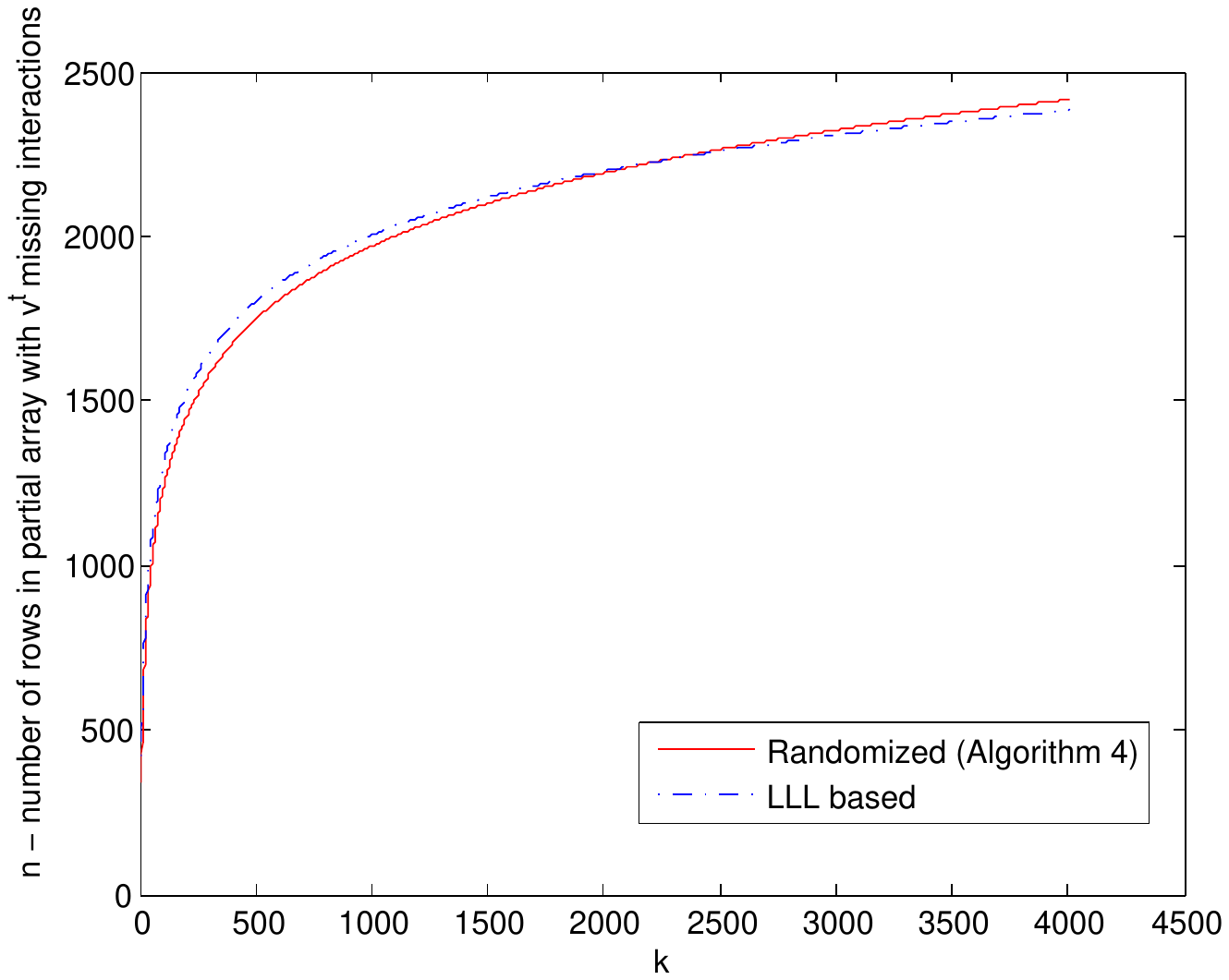}
        \caption{$t=4,\,v=3$.}
    \end{subfigure}
    \caption{Comparison of the size of the partial array constructed in the first stage.
    Compares the size of the partial array specified in Algorithm \ref{algo:two-stage-simple} in Section \ref{subsec:simple}, and the size derived in Section \ref{subsec:lll-first-stage}. }\label{fig:pa-size}
    \label{fig:mt-rand-comp}
\end{figure}

\subsection{Lov\'asz local lemma based two-stage bound}\label{subsec:lll-two-stage}
We can apply the techniques from Section \ref{subsec:lll-first-stage} to obtain a two-stage bound similar to Theorem \ref{thm:two-stage} using the Lov\'asz local lemma and conditional LLL distribution.
First we  extend a result from \cite{sarkar16}.

\begin{thm} \label{thm:lll-two-stage}
Let $t,\, k,\, v$ be integers with $k\ge t\ge2$, $v\ge2$ and let $\eta = \binom{k}{t}$, and $\rho = \binom{k}{t} - \binom{k-t}{t}$. 
If $\frac{\eta v^t \log\left(\frac{v^{t}}{v^{t}-1}\right)}{\rho} \le v^t$
Then 
\[ 
\CAN(t,k,v)\le \frac{\log{k \choose t}+t\log v+\log\log\left(\frac{v^{t}}{v^{t}-1}\right)+2}{\log\left(\frac{v^{t}}{v^{t}-1}\right)} - \frac{\eta}{\rho}.
\] 
\end{thm}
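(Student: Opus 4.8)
The plan is to mirror the proof of Theorem~\ref{thm:two-stage}, but to replace its plain union-bound first stage with the conditional Lov\'asz local lemma construction of Section~\ref{subsec:lll-first-stage}, and then optimize over the one free parameter that construction introduces: the number $m=|M|$ of tuples that the first stage is forced to cover in \emph{every} column set. Write $L=\log\left(\frac{v^{t}}{v^{t}-1}\right)$ for brevity. For a fixed $m$ I would take the first-stage height to be exactly the threshold $n=\frac{\log(e\rho m)}{L}$ of Equation~\eqref{eq:n1}, so that all bad events $B_C$ (some tuple of $M$ uncovered on column set $C$) can be simultaneously avoided. This is legitimate because at this value $e\rho m\left(1-\frac{1}{v^{t}}\right)^{n}=1$, so the hypothesis $ep(d+1)\le1$ of Lemma~\ref{lem:lllsym} holds with the dependency count $\binom{k}{t}-\binom{k-t}{t}$ established in Section~\ref{subsec:lll-first-stage}, and the conditional LLL distribution is well defined.

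Conditioned on all the $B_C$ being avoided, Lemma~\ref{lem:condlll} bounds the probability that any interaction whose tuple lies outside $M$ is uncovered by $e\left(1-\frac{1}{v^{t}}\right)^{n}$, so the expected number of such uncovered interactions is at most $E=\eta(v^{t}-m)e\left(1-\frac{1}{v^{t}}\right)^{n}$. The key simplification is that at the chosen $n$ we have $\left(1-\frac{1}{v^{t}}\right)^{n}=e^{-nL}=\frac{1}{e\rho m}$, which collapses the exponential and gives $E\le\frac{\eta(v^{t}-m)}{\rho m}$. Since the number of uncovered interactions is a nonnegative integer, there is an array in the conditional distribution that covers every tuple of $M$ on every column set and leaves at most $\lfloor E\rfloor\le E$ interactions uncovered; covering these one per row in the second stage yields $\CAN(t,k,v)\le n+E=\frac{\log(e\rho m)}{L}+\frac{\eta(v^{t}-m)}{\rho m}$.

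It then remains to choose $m$ to minimize $f(m)=\frac{\log(e\rho m)}{L}+\frac{\eta(v^{t}-m)}{\rho m}$. Differentiating, $f'(m)=\frac{1}{m}\left(\frac{1}{L}-\frac{\eta v^{t}}{\rho m}\right)$, which vanishes at $m^{\star}=\frac{\eta v^{t}L}{\rho}$ and changes sign from negative to positive there, so $m^{\star}$ is the minimizer. This is exactly where the hypothesis enters: $m^{\star}\le v^{t}$ is equivalent to $\frac{\eta v^{t}L}{\rho}\le v^{t}$, the stated condition, which guarantees both that $m^{\star}$ is an admissible size for a subset $M\subseteq[v]^{t}$ and that the leftover $E=\frac{1}{L}-\frac{\eta}{\rho}$ is nonnegative. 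Substituting $m=m^{\star}$ gives $e\rho m^{\star}=e\,\eta v^{t}L$, hence $n=\frac{\log(e\,\eta v^{t}L)}{L}=\frac{\log\binom{k}{t}+t\log v+\log L+1}{L}$, while $\frac{\eta(v^{t}-m^{\star})}{\rho m^{\star}}=\frac{\eta v^{t}}{\rho m^{\star}}-\frac{\eta}{\rho}=\frac{1}{L}-\frac{\eta}{\rho}$; adding these reproduces the claimed bound $\frac{\log\binom{k}{t}+t\log v+\log\log\left(\frac{v^{t}}{v^{t}-1}\right)+2}{\log\left(\frac{v^{t}}{v^{t}-1}\right)}-\frac{\eta}{\rho}$.

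The main obstacle I anticipate is not the computation but the bookkeeping around admissibility and integrality. One must check that $m^{\star}\ge1$ so the construction is non-trivial (this follows from $\eta/\rho\ge1$ and $v^{t}L>1$), and that replacing $m^{\star}$ by a nearby integer perturbs $f(m^{\star})$ only negligibly, since $f$ is flat to first order at its minimum; this perturbation, together with the $\lfloor\cdot\rfloor$ in the second-stage count, is absorbed into the ceiling implicit in bounding the integer $\CAN(t,k,v)$. I would also take care to confirm that the dependency degree controlling the conditional estimate in Lemma~\ref{lem:condlll} is the same $d=\binom{k}{t}-\binom{k-t}{t}-1$ used for the bad events, so that the multiplicative factor $e$ is correct; this is precisely the limited-dependence count from Section~\ref{subsec:lll-first-stage}, and it is what distinguishes this bound from the SLJ-based Theorem~\ref{thm:two-stage} and produces the improving term $-\frac{\eta}{\rho}$.
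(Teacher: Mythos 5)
Your proposal is correct and follows essentially the same route as the paper's proof: the LLL-based first stage covering a forced set $M$ of $m$ tuples on every column $t$-set at height $n=\frac{\log(e\rho m)}{\log\left(\frac{v^t}{v^t-1}\right)}$, the conditional LLL distribution giving an expected leftover of $\frac{\eta}{\rho}\left(\frac{v^t}{m}-1\right)$ uncovered interactions handled one row each, and minimization over $m$ at $m^{\star}=\frac{\eta v^t\log\left(\frac{v^t}{v^t-1}\right)}{\rho}$, with the stated hypothesis being exactly the admissibility condition $m^{\star}\le v^t$. Your additional bookkeeping (integrality of $m^{\star}$, flatness of $f$ at the minimum, and the dependency-degree count for the conditional event) is slightly more careful than the paper's, which asserts the minimizer and substitutes directly, but it changes nothing in substance.
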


\begin{proof}
%Let $A$ be an $n \times k$ array where each entry is chosen independently from $[v]$ with uniform probability.
%Let $M\subseteq [v]^t$ be a set of $m$ $t$-tuples of symbols, and $C\in \binom{[k]}{t}$ be a set of $t$ columns.
%Let $B_C$ denote the event that at least one of the tuples in $M$ is not covered in $A_C$.
%There are $\eta = \binom{k}{t}$ such events, and for all of them $\Pr[B_C]\le m\left(1-\frac{1}{v^t}\right)^n$.
%Moreover, when $k\ge 2t$, each of the events are mutually independent of all other events except for at most $\rho = \binom{k}{t} - \binom{k-t}{t} - 1 < t\binom{k}{t-1}$.
%Therefore, by the Lov\'asz local lemma when $e\rho m \left(1-\frac{1}{v^t}\right)^n \le 1$, none of the events $B_C$ occur.
%Solving for $n$ we obtain, when $n \ge \frac{\log (e\rho m)}{\log\left(\frac{v^{t}}{v^{t}-1}\right)}$ then there exists an $n \times k$ array $A$ over $[v]$ such that for all $C\in \binom{[k]}{t}$, $A_C$ covers all the $m$ tuples in $M$.
Let $M\subseteq [v]^t$ be a set of $m$ $t$-tuples of symbols.
Following the arguments of Section \ref{subsec:lll-first-stage},  when $n \ge \frac{\log (e\rho m)}{\log\left(\frac{v^{t}}{v^{t}-1}\right)}$  there exists an $n \times k$ array $A$ over $[v]$ such that for all $C\in \binom{[k]}{t}$, $A_C$ covers all  $m$ tuples in $M$.

At most $\eta (v^t-m)$ interactions are uncovered in such an array.
Using the conditional LLL distribution, the probability that one such interaction is not covered in $A$ is at most $e\left(1-\frac{1}{v^t}\right)^n$.
Therefore, by the linearity of expectation, we can find one such array $A$ that leaves at most $e \eta (v^t-m) \left(1-\frac{1}{v^t}\right)^n = \frac{\eta}{\rho}\left(\frac{v^t}{m}-1\right)$ interactions uncovered.
Adding one row per uncovered interactions to $A$, we obtain a covering array with at most $N$ rows, where
\[
N = \frac{\log (e\rho m)}{\log\left(\frac{v^{t}}{v^{t}-1}\right)} + \frac{\eta}{\rho}\left(\frac{v^t}{m}-1\right)
\]

The value of $N$ is minimized when $m= \frac{\eta v^t \log\left(\frac{v^{t}}{v^{t}-1}\right)}{\rho}$.
Because  $m \le v^t$, we obtain the desired bound.
\end{proof}

When $m=v^t$ this recaptures  the  bound of Theorem \ref{thm:godbole}.

Figure \ref{fig:bounds-comp} compares the LLL based two-stage bound  from Theorem \ref{thm:lll-two-stage} to the standard two-stage bound  from Theorem \ref{thm:two-stage}, the Godbole et al. bound in Theorem \ref{thm:godbole}, and   the SLJ bound  in Theorem \ref{thm:slj}.
Although the LLL based two-stage bound is tighter than the LLL based Godbole et al. bound, even for quite large values of $k$ the standard two-stage bound is tighter than the LLL based two-stage bound.
In practical terms, this specific LLL based two-stage method does not look very promising, unless the parameters are quite large.

\begin{figure}[htb]
    \centering
    \begin{subfigure}[b]{0.4\textwidth}
        \includegraphics[bb=100bp 238bp 520bp 553bp,clip,scale=0.5]{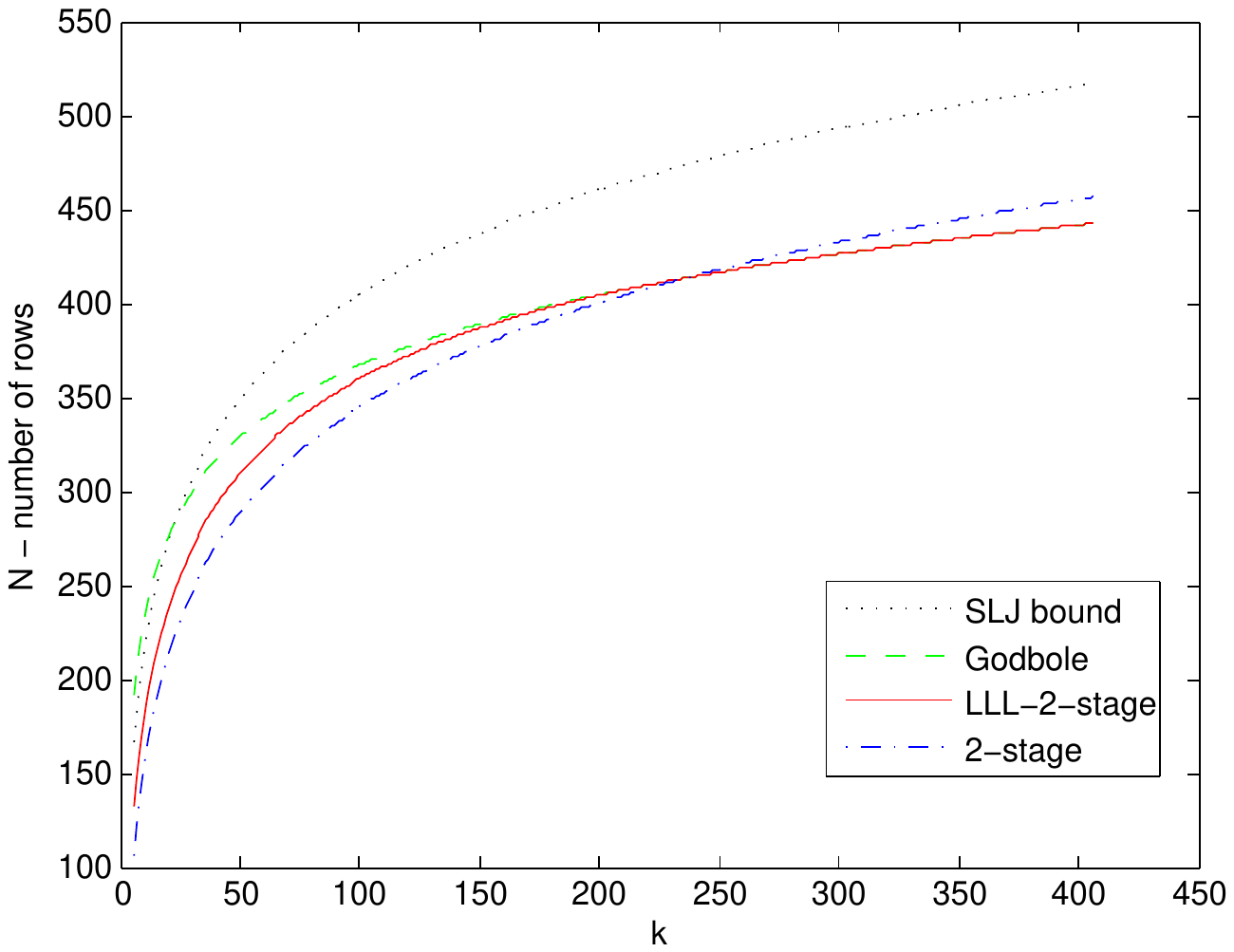}
        \caption{$t=3,\,v=3$.}
    \end{subfigure}
    \quad 
    \begin{subfigure}[b]{0.4\textwidth}
        \includegraphics[bb=100bp 238bp 520bp 553bp,clip,scale=0.5]{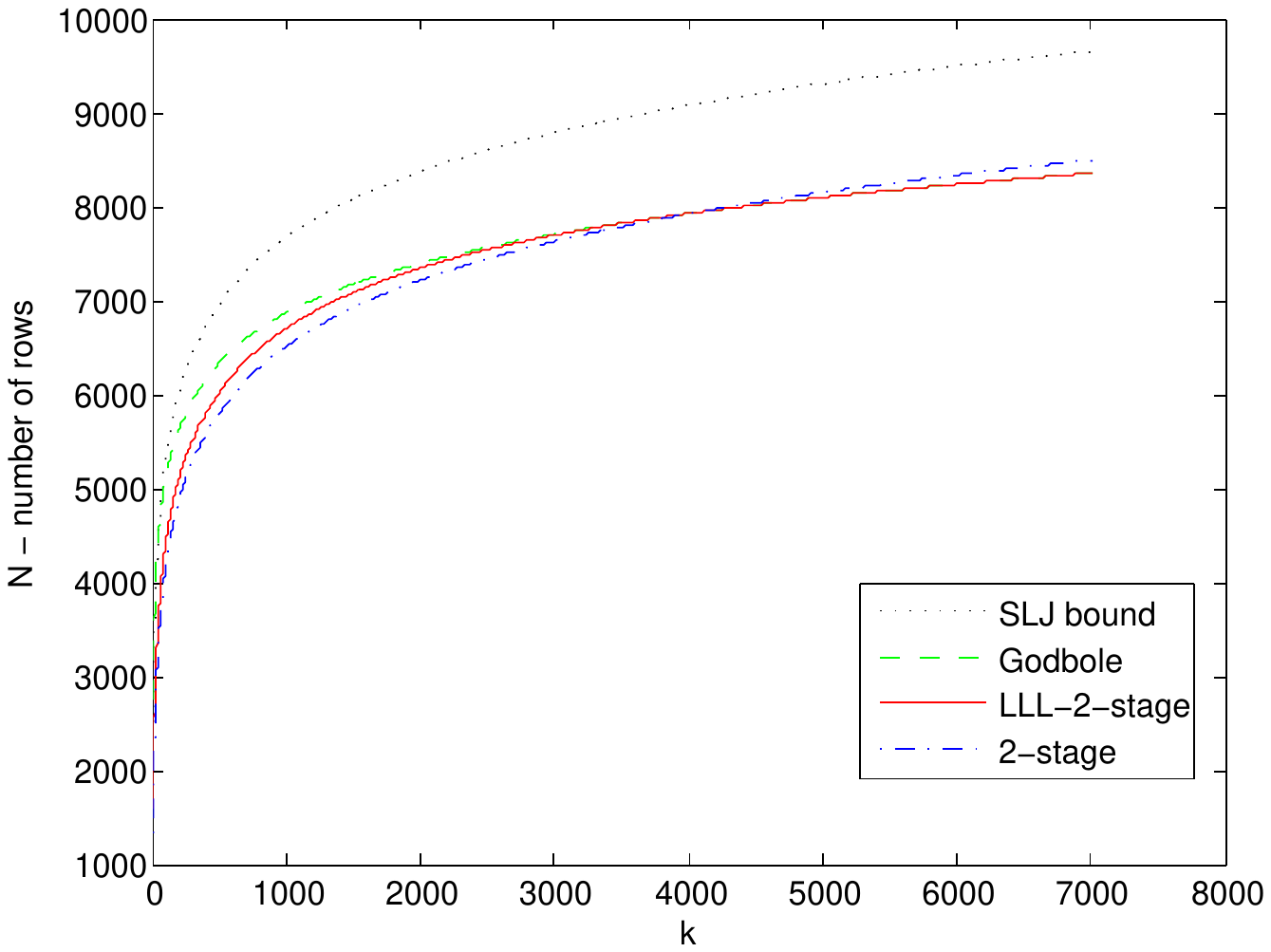}
        \caption{$t=4,\,v=4$.}
    \end{subfigure}
    \caption{Comparison among the LLL based two-stage bound  from Theorem \ref{thm:lll-two-stage}, the standard two-stage bound  from Theorem \ref{thm:two-stage}, the Godbole et al. bound in Theorem \ref{thm:godbole}, and   the SLJ bound  in Theorem \ref{thm:slj}.}\label{fig:bounds-comp}
\end{figure}

\section{Conclusion and open problems}\label{sec:conc}
Many concrete algorithms within a two-stage framework for covering array construction have been introduced and evaluated. 
The two-stage approach extends the range of parameters for which competitive covering arrays can be constructed, each meeting an \emph{a priori} guarantee on its size.
Indeed as a consequence a number of best known covering arrays have been improved upon.
Although each of the methods proposed has useful features, our experimental evaluation suggests that $\TS\left<\rand, \greedy; 2\rho,\Gamma\right>$ and $\TS\left<\rand, \den; 2\rho.\Gamma\right>$ with $\Gamma \in \{\C, \F\}$ realize a good trade-off between running time and size of the constructed covering array.

Improvements in the bounds, or in the algorithms that realize them,  are certainly of interest.
We mention some specific directions.
Establishing tighter bounds on the coloring based methods of Section \ref{subsec:coloring} is a challenging problem. 
Either better estimates of the chromatic number of the incompatibility graph after a random first stage, or a first stage designed to limit the chromatic number, could lead to improvements in the bounds.

%Coming up with a more efficient randomized algorithm for the first stage is an important problem from both theoretical as well as practical point of view. 
In Section \ref{subsec:lll-first-stage} and \ref{subsec:lll-two-stage} we  explored using a Moser-Tardos type algorithm for the first stage.
Although this is useful for asymptotic bounds, practical improvements appear to be limited.
Perhaps a different approach of reducing the number of bad events to be avoided explicitly by the algorithm may lead to a better algorithm.
A potential approach may look like following: ``Bad'' events would denote non-coverage of an interaction over a $t$-set of columns. 
We would select a set of column $t$-sets such that the dependency graph of the corresponding bad events have a bounded maximum degree (less than the original dependency graph).
We would devise a Moser-Tardos type algorithm for covering all the interactions on our chosen column $t$-sets, and  then apply the conditional LLL distribution to obtain an upper bound on the number of uncovered interactions.
However, the difficulty lies in the fact that ``all vertices have degree $\le\rho$'' is a non-trivial, ``hereditary'' property for induced subgraphs, and for such properties finding a maximum induced subgraph with the property is an NP-hard optimization problem \cite{Garey}.
There is still hope for a randomized or ``nibble'' like strategy that may find a reasonably good induced subgraph with a bounded maximum degree.
Further exploration of this idea seems to be a promising research avenue.

In general, one could consider more than  two stages. 
Establishing the benefit (or not) of having more than two stages is also an interesting open problem.
Finally, the application of the methods developed to mixed covering arrays appears to provide useful techniques for higher strengths; this merits further study as well. 

\section*{Acknowledgments}

The  research was supported in part by the National Science Foundation under Grant No. 1421058.

\bibliographystyle{abbrv}
\bibliography{CoveringArray}
\end{document}